\documentclass[final,1p,times,twocolumn]{elsarticle}

\usepackage{amssymb}
\usepackage[nointegrals]{wasysym}%% The amsmath package provides various useful equation environments.
\usepackage{amsmath}
\usepackage{xurl}
\usepackage{hyperref}
\usepackage{amsthm}

\newtheorem{theorem}{Theorem}
\newtheorem{corollary}{Corollary}
\newtheorem{property}{Property}

\usepackage{listings}

\journal{Security Journal}

\begin{document}

\begin{frontmatter}

\title{	Argos: A Multi-Tenant SaaS Architecture for Distributed Edge Security Orchestration}

\author[label1]{Agust\'in Javier Di Bartolo\corref{cor1}}
\ead{adibartolo@unex.es}

\author[label1]{Mohammadhossein Homaei}
\author[label1]{Andr\'es Caro}
\author[label1]{Mar \'Avila}

\cortext[cor1]{Corresponding author}

\affiliation[label1]{
    organization={Departamento de Ingenier\'ia de Sistemas Inform\'aticos y Telem\'aticos, Universidad de Extremadura},
    addressline={Av/ Universidad S/N},
    city={C\'aceres},
    postcode={10003},
    state={Extremadura},
    country={Spain}
}
%% Abstract
\begin{abstract}
The growing adoption of distributed edge devices for security auditing and operational monitoring exposes limitations in lightweight control mechanisms traditionally used to manage such nodes. While commercial Command-and-Control (C2) platforms offer advanced orchestration features, their licensing constraints, operational complexity, and proprietary nature limit accessibility for small and medium-sized enterprises and research environments. This paper presents \textit{Argos}, a multi-tenant SaaS C2 architecture designed for transparent, automated, and long-term orchestration of distributed edge devices. The framework implements a structured control plane with role-based access control (RBAC), tenant isolation, persistent task management, and cron-based workflow automation. Unlike operator-driven or stateless execution models, \textit{Argos} enables scheduled and autonomous coordination of security tasks across heterogeneous environments. To formally substantiate its isolation guarantees, the platform is modeled through a set-theoretic authorization framework, and a tenant non-interference property is proven. The proposed architecture was evaluated across three geographically distributed edge devices, executing 810 automated tasks mapped to MITRE ATT\&CK techniques. Results demonstrate a global failure rate of 3.33\% (95\% Wilson confidence interval: [2.30\%, 4.81\%]) with no statistically significant difference across ARM and x86 architectures ($\chi^2 = 1.55$, $p = 0.21$), and stable operation under heterogeneous latency conditions, including intercontinental deployments. To ensure scientific transparency and reproducibility, the evaluation methodology, sanitized datasets, and statistical scripts have been made publicly available in an open-source repository.\end{abstract}

% %%Graphical abstract
% \begin{graphicalabstract}
% %\includegraphics{grabs}
% \end{graphicalabstract}

%%Research highlights
% \begin{highlights}
% \item A multi-tenant SaaS Command-and-Control architecture is proposed for transparent and automated orchestration of distributed security auditing devices.
% \item A set-theoretic authorization model integrating RBAC and tenant isolation is formalized, and a tenant non-interference property is formally proven.
% \item Autonomous, cron-based workflow automation coordinates security tasks across heterogeneous edge devices without operator intervention.
% \item An empirical evaluation of 810 automated executions across three geographically distributed tenants yields a 3.33\% failure rate with architecture-independent behaviour.
% \item To ensure reproducibility, the evaluation framework, sanitized datasets, and statistical scripts are open-sourced and publicly accessible.
% \end{highlights}

%% Keywords
\begin{keyword}
Command and Control \sep Edge Security \sep Multi-Tenant Architecture \sep Security Orchestration \sep Adversary Emulation \sep Role-Based Access Control
\end{keyword}

\end{frontmatter}

%% Add \usepackage{lineno} before \begin{document} and uncomment 
%% following line to enable line numbers
%% \linenumbers

%% main text
%%

\section{Introduction}
\label{sec:intro}
Security teams today rely on many small auditing devices placed in different locations, and keeping track of all of them has become harder over time \cite{edgecomputing}. As the number of devices grows, a central point of control is needed to keep task execution consistent and to store audit results in a reliable way. Command and Control (C2) systems were first studied mostly in connection with malware and botnets \cite{zeidanloo2009}, so for a long time they were seen only as attack tools. More recent studies have changed this view. They show that the same C2 ideas can also be used to manage distributed systems in an organized way, through central control, task distribution, and coordinated execution \cite{wang2026sos}. Seen from this angle, a C2 system is not harmful by nature. It becomes a legitimate way to manage devices when it is built and governed for authorized, controlled use.

Within security auditing and experimental validation environments, a C2 based platform can support coordinated management of authorized audit nodes, structured execution of assessment tasks, and persistent traceability of performed actions. Such capabilities are particularly relevant in distributed audit scenarios that require reproducibility, accountability, and centralized policy enforcement. Despite this potential, available solutions remain limited. Commercial platforms are typically distributed under proprietary licenses that restrict independent code auditing and limit adoption in academic or resource constrained environments. Open source tools are largely designed for offensive security operations and adversary emulation, prioritizing stealth and operational flexibility over transparency, governance, and structured multi user management. Academic developments tend to focus on conceptual validation or automated attack campaign execution, without comprehensively addressing multi tenant isolation, structured result persistence, and role based access control within a unified orchestration architecture.

The motivation for this work originates from limitations identified in a prior distributed auditing initiative \cite{DiBartolo2026}, in which the employed C2 platform \cite{telec2} did not support structured coordination across multiple devices nor differentiated management of users and organizational entities. Isolated administration of such devices, whether through remote shell access or independent scripting approaches as observed in earlier implementations \cite{raspy3}, proved insufficient once automated orchestration, traceable result storage, and formal separation between users and organizations were required. These constraints motivated the definition of an architecture explicitly oriented toward the secure orchestration of distributed audit nodes, incorporating role based access control, logical separation between organizations, and centralized persistence of generated artifacts.

In this context, the primary objective of this work is to define the architecture of \textit{Argos}, a role governed C2 platform conceived under a Software as a Service (SaaS) model and intended for deployment in trusted environments. The proposed approach builds on open source components, service segmentation, centralized identity management, and isolated execution within audit nodes. The contribution of this work is twofold. First, it provides the architectural definition and technological integration required to support transparent, multi tenant, and automated orchestration of distributed edge devices, together with a set theoretic formalization of its authorization model and a proven tenant non interference property. Second, it reports an empirical evaluation of the platform across three geographically distributed edge devices, executing 810 automated tasks mapped to MITRE ATT\&CK techniques, and demonstrates stable and architecture independent operation.

The rest of this paper is organized as follows. Section~\ref{sec:related} looks at related work on command and control systems and device orchestration. Section~\ref{sec:arch} explains the architecture and the main components of the proposed platform. Section~\ref{sec:formal} presents a formal model of the authorization system, together with models for reliability, latency, availability, scalability, and threats. Section~\ref{sec:method} describes the experimental setup, including the evaluation environment, the deployment scenario, and the criteria used for evaluation. Section~\ref{sec:results} reports the results and discusses what they mean, along with the limitations found. Finally, Section~\ref{sec:conclusion} closes the paper and points to directions for future work.
\section{Related Work}
\label{sec:related}

\subsection{Command-and-Control Operational Contexts}
\label{subsec2.1}
Command-and-control (C2) platforms are often incorrectly equated 
with basic remote access mechanisms, such as direct shell access 
or manual command execution. While remote access enables 
point-to-point interaction with individual systems, C2 
architectures add a control layer that supports coordinated task 
execution, automation, and centralized visibility across multiple 
devices \cite{haynes2014c2redteam}. This distinction matters for distributed security operations, which require scalability, traceability, and repeatability \cite{wang2026sos, Homaei2024}. Recent studies 
confirm this framing gap persists. Chatzoglou and Kambourakis 
\cite{chatzoglou2025c3} optimize C2 for stealth via a decentralized 
messaging channel. Winkler and Sharma \cite{winkler2025wazuh} 
report only partial SIEM detection of C2 activity, near 85\%. Rao 
et al. \cite{rao2024postexploitation} evaluate C2 frameworks 
strictly as post-exploitation tooling. None addresses governed, 
multi-tenant orchestration.
In practice, C2 solutions serve three broad contexts: commercial 
platforms, open-source frameworks, and research-oriented systems. 
The following subsections analyze each category, identifying 
strengths and limitations that characterize the structural gap 
addressed by the proposed architecture.

\subsection{Commercial Command-and-Control Platforms}
\label{subsec2.2}
Commercial command-and-control platforms are widely adopted by 
large enterprises and professional security service providers due 
to their maturity and comprehensive feature sets. These platforms 
typically provide centralized management interfaces, collaborative 
operator support, advanced payload and session management, and 
extensive integration with post-exploitation and adversary 
simulation workflows \cite{cobaltstrike,alharthi2025adexploit}. 
Their architectures are optimized for reliability and scalability, 
enabling coordinated operations across complex and distributed 
infrastructures.
Despite these advantages, commercial C2 platforms are commonly 
distributed under restrictive licensing and subscription-based 
pricing models. This limits their accessibility for small and medium-sized organizations, academic institutions, and experimental environments \cite{Homaei2025VCD}. Their closed-source nature also restricts 
transparency and verifiability, posing challenges for organizations 
requiring full control, internal auditing, or compliance with 
governance policies. This gap persists. Anand et al. 
\cite{anand2026c2framework} report the same choice between flagged 
open-source tools and proprietary solutions few understand. Their 
own framework still lacks role-based access control and tenant 
isolation. As a result, commercial solutions remain effective for 
large-scale engagements but are often unsuitable for 
resource-constrained or research-oriented contexts.

\subsection{Open-Source Command-and-Control Platforms}
\label{subsec2.3}
Open-source command-and-control platforms provide an accessible 
alternative for organizations and security teams that lack the 
resources to develop proprietary C2 infrastructures. These 
frameworks offer flexible and extensible designs, enabling 
encrypted communications, session handling, and modular command 
execution across distributed systems. Widely used tools such as 
Empire, Sliver, Merlin, and Havoc illustrate the technical 
maturity achieved by community-driven C2 development 
\cite{piet2018opensourcec2}. Maeda and Mimura 
\cite{maeda2021postexploitation} even combine PowerShell Empire 
with deep reinforcement learning to automate post-exploitation.
However, most open-source C2 platforms are primarily designed to 
support offensive security operations and adversary emulation 
scenarios. Consequently, their architectures tend to prioritize 
stealth, evasion, and short-lived engagements over transparency, 
long-term orchestration, and multi-user governance. Additionally, 
the association of these tools with malicious activity can 
introduce reputational and operational risks when deployed in 
controlled or regulated environments. Ward and Wang 
\cite{ward2025sliverdetect} show that Sliver's DNS-based C2 
channel can now be fingerprinted with over 99\% accuracy. 
Table~\ref{tab:opensource_c2_comparison} summarizes 
representative frameworks along these lines, highlighting 
their primary use cases and core characteristics.

\begin{table}[htbp]
\centering
\caption{Comparison of Representative Open-Source Command-and-Control Platforms}
\label{tab:opensource_c2_comparison}
\footnotesize
\begin{tabular}{p{0.9cm} p{0.8cm} p{3.0cm} p{7.1cm}}
\hline
\hline
\textbf{Ref.} & \textbf{Platform} & \textbf{Primary Use} & \textbf{Advantages} \\
\hline
\cite{havoc} &
Havoc &
Red team operations &
Modern GUI-based C2 framework with modular payloads and encrypted communications, designed as an open-source alternative to commercial platforms \\
\cite{piet2018opensourcec2,pupy} &
Pupy &
Cross-platform post-exploitation &
Supports multiple operating systems, emphasizes in-memory execution, and provides a flexible Python-based extension model \\
\cite{piet2018opensourcec2,maeda2021postexploitation,empire} &
Empire &
Post-exploitation automation &
Mature framework with a large PowerShell module ecosystem and strong integration with offensive security workflows \\
\cite{merlin} &
Merlin &
Experimental HTTP-based C2 &
Research-oriented C2 framework leveraging encrypted HTTP/2 communications and novel transport mechanisms \\
\cite{ward2025sliverdetect,sliver} &
Sliver &
Red team and penetration testing &
Lightweight and cross-platform C2 framework written in Go, supporting modern transports and scalable session management \\
\hline
\hline
\end{tabular}
\end{table}

As shown in Table~\ref{tab:opensource_c2_comparison}, open-source 
C2 platforms provide powerful operational capabilities and reduce 
barriers to adoption compared to commercial solutions. Nevertheless, 
these frameworks consistently prioritize offensive flexibility and 
short-term engagements over transparency, multi-user governance, 
and persistent orchestration, restricting their applicability in 
trusted operational environments.

\begin{table}[htbp]
\centering
\caption{Comparison of Research-Oriented Command-and-Control Frameworks}
\label{tab:related_c2_comparison}
\footnotesize
\begin{tabular}{p{0.12cm} p{6.5cm} p{5.5cm}}
\hline
\hline
\textbf{Ref} & \textbf{Principal Advantage} & \textbf{Main Limitation} \\
\hline
\cite{buzzard2024} &
Decentralized post-exploitation C2 architecture leveraging encrypted tunneling without centralized redirectors, enabling low-latency and resilient command-and-control operations. &
Primarily designed for offensive post-exploitation scenarios; lacks multi-user governance and persistent orchestration for trusted environments. \\
\cite{mephisto2024} &
Comprehensive adversary emulation framework supporting structured and repeatable execution of attack behaviors across multiple phases. &
Emphasizes adversarial simulation and stealth; implemented as a research prototype with limited focus on governance and sustained deployment. \\
\cite{deepred2025} &
Integration of machine learning techniques within the C2 control plane to enable intelligent task coordination and automated decision-making. &
High architectural complexity and research-driven orientation limit suitability for practical, long-term operational deployment. \\
\cite{Badea2025} &
Structured automation of adversary emulation workflows, facilitating controlled experimentation and reproducible attack simulation. &
Focused on simulation rather than persistent orchestration, multi-user management, or coordination of trusted distributed devices. \\
\cite{zeidanloo2009} &
Foundational analysis of centralized, decentralized, and peer-to-peer C2 architectures that informs modern system design. &
Conceptual study that does not address automation, governance, or operational orchestration requirements. \\
\cite{laccolith2024} &
Hypervisor-based adversary emulation that evades all five tested commercial anti-virus products, whereas MITRE CALDERA fails to evade detection under the same conditions. &
Purely offensive emulation tool with no multi-tenant, governance, or persistent orchestration layer; validated only as a standalone research prototype. \\
\cite{autonomousadversary2026} &
LLM-based agents plan and orchestrate multi-step red-team activity, such as lateral movement, with an explicit evaluation emphasis on governance. &
Scoped to bounded adversary-emulation scenarios rather than persistent, multi-tenant SaaS orchestration; governance is evaluated, not architecturally enforced. \\
\hline
\hline
\end{tabular}
\end{table}

\subsection{Research-Oriented Command-and-Control Frameworks}
\label{subsec2.4}
Academic research has explored C2 architectures primarily within 
the context of adversarial emulation, architectural 
experimentation, and conceptual validation. In contrast to 
commercial platforms or community-driven offensive tools, 
research-oriented C2 frameworks typically emphasize structural 
innovation, resilience modeling, experimental reproducibility, or 
intelligent coordination mechanisms under controlled laboratory 
conditions. While these contributions advance theoretical 
understanding, they often prioritize conceptual novelty over 
long-term deployment considerations.
Table~\ref{tab:related_c2_comparison} summarizes selected academic 
approaches, outlining their principal strengths and key 
limitations.

As observed in Table~\ref{tab:related_c2_comparison}, 
research-oriented C2 frameworks provide valuable insights into 
architectural diversification and adversary emulation 
methodologies. However, these systems are predominantly 
implemented as experimental prototypes and rarely integrate 
sustained operational capabilities such as formal multi-user 
governance, multi-tenant isolation, persistent task orchestration, 
or deployment across resource-constrained distributed audit nodes. 
Recent work confirms this. Laccolith \cite{laccolith2024} evades 
detection where CALDERA \cite{caldera} fails. LLM-based red-team 
agents \cite{autonomousadversary2026} add coordination but target 
lateral movement, not persistent orchestration.

\subsection{Open Challenges and Research Gaps}
\label{subsec2.5}
Despite the extensive availability of commercial, open-source, and 
research-oriented C2 solutions, several structural challenges 
remain insufficiently addressed. Existing frameworks typically 
prioritize either offensive realism or minimal operational 
complexity, leaving a gap for platforms designed to support 
transparent, multi-user orchestration of trusted distributed 
devices. Even recent proposals confirm this gap, still forcing a 
choice between flagged open-source tools and opaque proprietary 
alternatives \cite{anand2026c2framework}.
Table~\ref{tab:c2_comparison} provides a consolidated comparison 
of representative C2 approaches, highlighting their primary focus 
and inherent trade-offs.

\begin{table}[htbp]
\centering
\caption{High-Level Comparison of Command-and-Control Approaches}
\label{tab:c2_comparison}
\small
\begin{tabular}{p{3.4cm} p{3.5cm} p{5.2cm}}
\hline
\hline
\textbf{Approach} &
\textbf{Primary Focus} &
\textbf{Key Limitations} \\
\hline
Commercial C2 Platforms &
Professional red teaming and adversary simulation &
High licensing cost, closed-source design, limited transparency and accessibility for small organizations \\
Open-Source C2 Platforms &
Offensive security operations and practitioner-driven tooling &
Strong offensive bias, limited governance and transparency, often designed for short-lived engagements \\
Academic C2 Frameworks &
Research experimentation and adversary emulation &
Prototype-level implementations with limited support for long-term operational deployment \\
Argos (Proposed) &
Trusted security orchestration and structured auditing &
Initial implementation scope and evolving feature maturity \\
\hline
\hline
\end{tabular}
\end{table}
The analysis reveals several persistent research and engineering gaps:
\begin{itemize}
  \item \textbf{Limited lifecycle coverage:} Most C2 platforms concentrate on post-exploitation phases and do not support coordinated orchestration across earlier stages such as reconnaissance, validation, and structured assessment workflows.
  
  \item \textbf{Absence of formal role-based interaction models:} Existing systems primarily target technical operators and rarely incorporate structured role-based interfaces suitable for supervisory or organizational oversight.
  
  \item \textbf{Insufficient auditability and traceability:} Persistent, structured logging and centralized evidence retention are often not integral design components, limiting applicability in compliance-oriented environments.
  
  \item \textbf{Restricted automation capabilities:} Native support for structured playbooks, scheduled execution, and repeatable orchestration of security tasks remains limited.
  
  \item \textbf{Weak multi-user and multi-device coordination:} Coordinated management of multiple users and heterogeneous distributed nodes within trusted environments is not comprehensively addressed.
  
  \item \textbf{Limited suitability for long-term edge deployment:} Many existing C2 solutions are not designed for sustained operation on resource-constrained audit devices deployed at the network edge.
  
  \item \textbf{Lack of AI-assisted coordination in governed C2 
  platforms:} Recent work integrates AI agents into bounded 
  adversary-emulation tasks, such as LLM-driven red-team 
  orchestration \cite{autonomousadversary2026} and 
  reinforcement-learning-based post-exploitation 
  \cite{maeda2021postexploitation}. No existing framework, 
  however, embeds AI-assisted coordination within a persistent, 
  multi-tenant, governance-enforcing C2 platform.
  
\end{itemize}
These identified gaps motivate the architectural design of \textit{Argos}, presented in the following section, which aims to bridge the divide between experimental C2 research and production-oriented, governance-aware security orchestration.

\subsection{Capability Comparison}
\label{subsec2.6}
This subsection compares \textit{Argos} with representative
frameworks on capability dimensions relevant to trusted, long-term
orchestration. Each dimension is rated supported, partial, or
absent, from official documentation and repositories rather than
re-execution. This fits since the reference frameworks are
operator-driven offensive tools, while \textit{Argos} is
automation-oriented and multi-tenant; a direct performance
benchmark would not be fair.
The reference set comprises Sliver \cite{sliver} and Havoc
\cite{havoc} (open-source multiplayer frameworks), Empire
\cite{empire} (a mature post-exploitation framework), Mythic
\cite{mythic} (the closest analogue: web-based multi-operator
access, RBAC, container deployment), MITRE Caldera \cite{caldera}
(an ATT\&CK-aligned, auditable emulation platform), and Cryton
\cite{cryton} (an academic framework with cron-based scheduling).

\begin{table*}[htbp]
\centering
\caption{Capability comparison of \textit{Argos} against
representative command-and-control and orchestration frameworks.
$\CIRCLE$ denotes full support, $\LEFTcircle$ partial or
operation-scoped support, and $\Circle$ absence of the capability
based on official documentation.}
\label{tab:capability_comparison}
\footnotesize
\begin{tabular}{p{4.3cm} c c c c c c c}
\hline
\hline
\textbf{Capability dimension} &
\textbf{Sliver} & \textbf{Havoc} & \textbf{Empire} &
\textbf{Mythic} & \textbf{Caldera} & \textbf{Cryton} &
\textbf{Argos} \\
\hline
Multi-operator support        & $\CIRCLE$ & $\CIRCLE$ & $\LEFTcircle$ & $\CIRCLE$ & $\CIRCLE$ & $\LEFTcircle$ & $\CIRCLE$ \\
Role-based access control     & $\Circle$ & $\Circle$ & $\Circle$ & $\LEFTcircle$ & $\LEFTcircle$ & $\Circle$ & $\CIRCLE$ \\
Multi-tenant isolation        & $\Circle$ & $\Circle$ & $\Circle$ & $\Circle$ & $\Circle$ & $\Circle$ & $\CIRCLE$ \\
Scheduled / cron automation   & $\Circle$ & $\Circle$ & $\Circle$ & $\Circle$ & $\LEFTcircle$ & $\CIRCLE$ & $\CIRCLE$ \\
Persistent audit trail        & $\LEFTcircle$ & $\LEFTcircle$ & $\LEFTcircle$ & $\CIRCLE$ & $\CIRCLE$ & $\CIRCLE$ & $\CIRCLE$ \\
Non-offensive design intent   & $\Circle$ & $\Circle$ & $\Circle$ & $\Circle$ & $\LEFTcircle$ & $\LEFTcircle$ & $\CIRCLE$ \\
Containerized on-demand exec. & $\Circle$ & $\Circle$ & $\Circle$ & $\CIRCLE$ & $\LEFTcircle$ & $\LEFTcircle$ & $\CIRCLE$ \\
Edge / ARM deployment focus   & $\LEFTcircle$ & $\Circle$ & $\Circle$ & $\Circle$ & $\Circle$ & $\Circle$ & $\CIRCLE$ \\
Formal isolation guarantee    & $\Circle$ & $\Circle$ & $\Circle$ & $\Circle$ & $\Circle$ & $\Circle$ & $\CIRCLE$ \\
Open-source availability      & $\CIRCLE$ & $\CIRCLE$ & $\CIRCLE$ & $\CIRCLE$ & $\CIRCLE$ & $\CIRCLE$ & $\LEFTcircle$ \\
\hline
\hline
\end{tabular}
\end{table*}
A consistent pattern emerges. Sliver, Havoc, and Empire offer
strong operator collaboration and encrypted communication, but
lack role-based access control, multi-tenant isolation, and native
scheduling, built for short-lived engagements, not sustained
governance. Mythic goes further, with role-scoped operators and
containerized deployment, but its access model is operation-scoped,
not tenant-scoped, with no strict organizational partitioning.
Caldera and Cryton add structured, repeatable automation,
cron-driven for Cryton, but as emulation tools they too omit
tenant isolation and formal governance.
Only \textit{Argos} satisfies multi-tenant isolation and a
formal isolation guarantee, matching the non-interference property
proven in Section~\ref{subsec:formal_isolation}. It pairs the
automation maturity of Cryton and Caldera with a governance and
isolation model absent elsewhere, substantiating the structural
gap motivating \textit{Argos}'s architecture.

\section{System Architecture}
\label{sec:arch}

\textit{Argos} is a multi-user, multi-tenant command-and-control platform delivered under a Software-as-a-Service model. The central infrastructure is hosted in the cloud, centralizing management, orchestration, and data storage. This reduces deployment complexity for client organizations and supports long-term scalability. The platform manages trusted security auditing devices such as penetration testing dropboxes based on single-board computers \cite{artificesecurity}. These devices execute automated and scheduled security assessments across geographically dispersed environments.

The architecture is organized into five functional stages: access and role management, validation and visualization, orchestration and data persistence, secure connectivity, and endpoint execution. This modular design separates responsibilities, enforces component isolation, and supports horizontal scalability. Figure~\ref{fig:architecture} provides a global view of the proposed architecture; each stage is analyzed in the following subsections.

\begin{figure}[ht]
    \centering
    \includegraphics[width=0.8\linewidth]{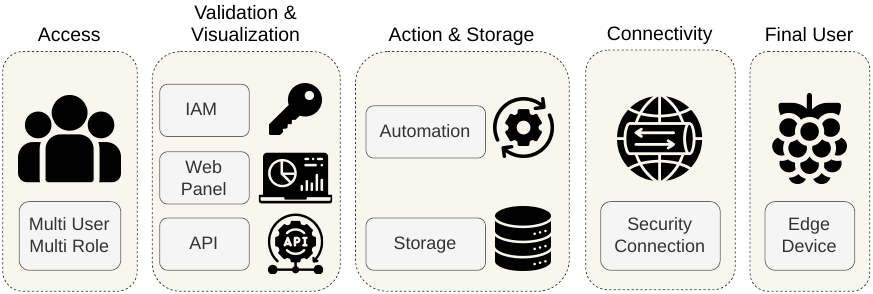}
    \caption{\textit{Argos} C2 platform architecture.}
    \label{fig:architecture}
\end{figure}

The proposed architecture consolidates device management, task coordination, and result aggregation into a single platform. This addresses limitations of isolated C2 deployments, reduces operational complexity, and enables unified visibility across auditing activities. Structured dashboards allow non-technical stakeholders such as supervisors and management personnel to access high-level security insights without interacting with underlying tools.

Transparency and openness are core architectural principles of \textit{Argos}. The platform relies exclusively on open-source components, avoiding opaque third-party services that may introduce hidden behaviors, maintenance risks, or security concerns. Automation logic resides in the control plane, enabling security workflows to execute without continuous operator interaction and minimizing resource consumption on managed endpoints. Task definitions, execution logs, and results are stored centrally, supporting historical analysis, reporting, and internal auditing. These capabilities are typically absent in short-lived C2 deployments.

Unlike traditional C2 systems that rely on malware implants or binary payloads, \textit{Argos} operates exclusively through dedicated and authorized auditing devices (RedPi). These devices are explicitly provisioned to establish secure connections with the control plane and execute predefined tasks, reinforcing the platform's focus on legitimate and transparent operation.

\subsection{Access Management and Role Control}
\label{subsec:access}

The first functional stage of the architecture corresponds 
to user access management and role-based access control. 
\textit{Argos} implements a Role-Based Access Control 
(RBAC) model \cite{zou2010rbac,Sandhu1996}, designed to enforce a 
clear separation of responsibilities and restrict the 
scope of permitted actions for each user type across 
managed edge devices \cite{Bhatt2021}.

As illustrated in Figure~\ref{fig:roles} three principal 
roles are defined within the platform:

\begin{itemize}
  \item \textbf{Administrator:} holds elevated privileges 
  over system configuration, tenant management, user 
  provisioning, and supervision of critical platform 
  parameters. This role has access to structural 
  maintenance and control functions across the system.

  \item \textbf{Operator:} is responsible for the 
  execution and configuration of auditing tasks. Operators 
  may trigger manual actions, define parameters for 
  enabled modules, and schedule recurring tasks, but do 
  not have access to global configuration settings or the 
  administration of other tenants.

  \item \textbf{Read-only:} oriented toward non-technical 
  or supervisory profiles. This role enables visualization 
  of dashboards, historical results, and aggregated 
  metrics, without the ability to execute tasks or modify 
  system configuration.
\end{itemize}

\begin{figure}[h!]
    \centering
    \includegraphics[width=0.70\linewidth]{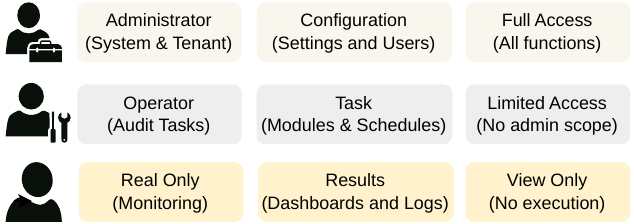}
    \caption{Platform roles}
    \label{fig:roles}
\end{figure}

The platform follows a multi-tenant architectural model 
\cite{kumar2020multitenant}, 
enabling multiple independent organizations to operate 
concurrently within a single shared infrastructure. 

\begin{figure}[h!]
    \centering
    \includegraphics[width=0.6\linewidth]{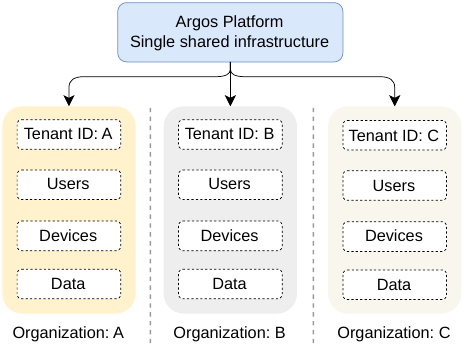}
    \caption{Multi-tenant model}
    \label{fig:multitenant}
\end{figure}

As illustrated in Figure~\ref{fig:multitenant}, each organization 
is represented as an isolated tenant that strictly partitions 
users, devices, tasks, and audit results, preventing any 
visibility or interaction across organizational boundaries. Each 
user belongs to exactly one tenant, so that role and tenant 
together define the authorization boundary governing every 
operation: the role determines \textit{what} a user can do, and 
the tenant determines \textit{with which resources}.

This combined RBAC and multi-tenant model strengthens system 
security while enabling structured collaboration between 
technical and managerial profiles. Different organizational 
levels can interact with the platform in a controlled and 
auditable manner.

\subsection{Validation and Authentication}
\label{subsec:authn}

The second functional stage encompasses identity management, authentication, and user-facing visualization. A dedicated Identity and Access Management (IAM) service centralizes authentication, temporary credential issuance, and authorization policy enforcement \cite{singh2023iam}. Each user accesses only the resources corresponding to their assigned role and organizational context.

The IAM service is deployed as an independent component within the cloud environment, decoupling credential management from the application's core logic. This separation facilitates security policy enforcement, session management, and integration with the role-based access control model defined in the previous stage.

The backend control plane acts as the authorization decision point for every incoming request. It validates the token issued by the IAM service by verifying its signature, expiry, and embedded claims, then applies RBAC policies before granting access to any internal resource. This eliminates per-request communication with the identity service.

\begin{figure}[h!]
    \centering
    \includegraphics[width=1\linewidth]{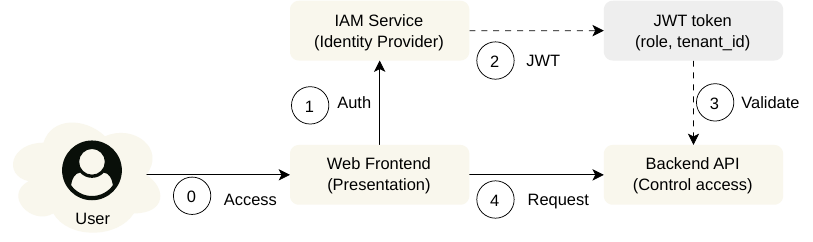}
    \caption{Identity and Access Management (IAM)}
    \label{fig:auth}
\end{figure}

As illustrated in Figure~\ref{fig:auth}, the interaction between 
these components follows four phases:

\begin{enumerate}
\item \textbf{Authentication.} The web frontend delegates authentication to the IAM service, which validates credentials against configured policies and the user directory. 
\item \textbf{Token issuance.} Upon successful validation, the IAM generates a signed JSON Web Token (JWT) containing the user's identity, role, and organizational tenant identifier. The token serves as a temporary access credential for the session. 
\item \textbf{Validation and authorization.} The backend verifies the JWT signature, expiry, and attribute claims. Authentication establishes who the user is; authorization determines what the user may do. The RBAC model is applied accordingly. 
\item \textbf{Controlled access.} Once authorized, the backend grants access to the requested internal services. Responses are returned to the frontend within a view scoped to the user's role and tenant.
\end{enumerate}

The multi-tenant isolation model is enforced 
coordinately between the identity layer and the backend 
control plane. The IAM service embeds the user's tenant 
membership within the issued token, as shown in the 
following example:

\begin{lstlisting}
{
  "sub": "user_id",
  "role": "operator",
  "tenant_id": "org_02"
}
\end{lstlisting}

Each backend request is then processed in function of 
the tenant identifier contained in the token, ensuring 
that data access, task execution, and result 
visualization remain strictly isolated between 
organizations.

\subsection{Orchestration and Data Persistence}
\label{subsec3.4}

Orchestration constitutes the operational core of 
\textit{Argos}, enabling the structured and 
coordinated execution of security auditing tasks across 
distributed edge devices \cite{soar}. Unlike traditional 
point-in-time execution models, the platform integrates 
a dedicated automation engine within the control plane 
that formalizes the sequence of actions through 
structured workflows defined as playbooks 
\cite{bridges2023soar}. This design eliminates the need 
for continuous operator interaction and enables 
repeatable, consistent execution of security routines 
across heterogeneous environments.

Figure~\ref{fig:orchestration_flow} illustrates the complete 
task execution workflow, from the initial user or 
scheduled trigger to the visualization of results in 
the dashboard. The flow can be described through five 
key transitions:

\begin{figure}[htbp]
    \centering
    \includegraphics[width=0.85\linewidth]{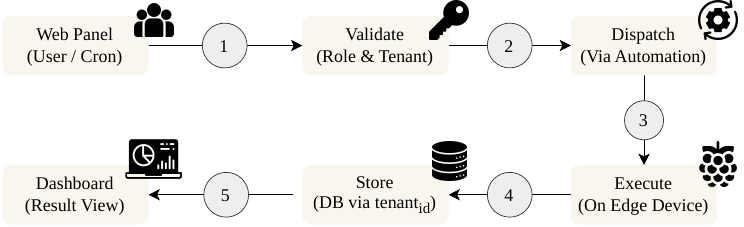}
    \caption{Orchestration and data persistence flow.}
    \label{fig:orchestration_flow}
\end{figure}

\begin{enumerate}
  \item \textbf{Task trigger.} A task is initiated through the 
  web panel either manually by an operator or automatically via 
  a scheduled routine. Both mechanisms produce an identical 
  execution request directed to the backend control plane.
  \item \textbf{Validation.} The backend verifies the user's 
  identity, role, and tenant context, along with the binding 
  between the tenant and the target edge device. Only authorized 
  requests with a valid tenant-device association are forwarded 
  for execution.
  \item \textbf{Task dispatch.} The automation engine generates 
  a unique execution instance and dispatches the task to the 
  assigned edge device over an encrypted tunnel. Playbook logic 
  determines execution paths and parameters based on task type 
  and context.
  \item \textbf{Execution and result collection.} The edge device 
  receives the request through its local agent, instantiates an 
  ephemeral container for the task, and returns structured results 
  asynchronously to the automation engine via webhook.
  \item \textbf{Persistence and visualization.} Results and 
  metadata such as execution status, timestamp, and tenant 
  identifier are stored in the centralized relational database 
  under the corresponding \texttt{tenant\_id}. The dashboard 
  then presents the results scoped to the user's organization.
\end{enumerate}

Figure~\ref{fig:playbook} illustrates an example 
automation playbook used during the evaluation, showing 
how conditional logic and webhook-based interactions 
are combined to orchestrate a complete task execution 
workflow. A formal characterization of the playbook 
structure and of the execution lifecycle, together with 
a termination guarantee, is provided in 
Section~\ref{subsec:formal_playbook}.

\begin{figure}[htbp]
    \centering
    \includegraphics[width=0.85\linewidth]{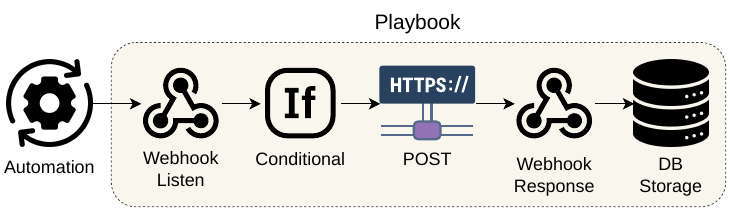}
    \caption{Example automation playbook.}
    \label{fig:playbook}
\end{figure}

The execution model supports both on-demand and 
scheduled tasks, enabling the automation of repetitive 
security routines without requiring constant operator 
intervention, while guaranteeing consistency and 
traceability over time.

Regarding data persistence, all task definitions, 
executions, results, and activity logs are stored in 
a centralized relational database. As shown in 
Figure~\ref{fig:db_tenant}, each record is explicitly 
associated with a \texttt{tenant\_id}, enforcing logical 
isolation between organizations and enabling historical 
analysis, reporting, and auditable traceability of all 
performed operations.

\begin{figure}[htbp]
    \centering
    \includegraphics[width=0.65\linewidth]{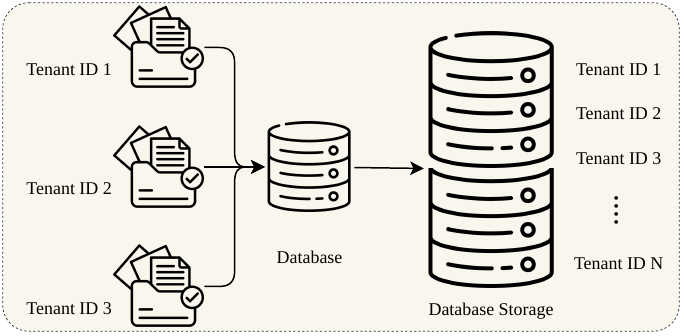}
    \caption{Relational database storage model.}
    \label{fig:db_tenant}
\end{figure}

The separation between orchestration and endpoint 
execution allows edge devices to remain lightweight, 
delegating coordination logic and data storage to the 
cloud-based control plane. This design supports 
horizontal scalability and sustained operation in 
resource-constrained environments. Overall, the 
integration of the automation engine, backend control 
plane, and persistent data layer extends the traditional 
command-and-control model toward a structured 
orchestration scheme with native support for 
multi-user operation, tenant isolation, and repeatable 
task execution.

\subsection{Secure Communication Model}
\label{subsec:comm}

Secure connectivity is a fundamental component of the 
proposed architecture, ensuring protected communication 
between the cloud-hosted control plane and distributed 
edge devices deployed in client environments. Since 
\textit{Argos} operates under a SaaS model, all 
central platform components reside within a controlled 
cloud environment, while edge devices may be located 
in heterogeneous external networks protected by 
firewalls or network address translation mechanisms.

A Web Application Firewall (WAF) is incorporated as 
the first line of defence between external users and 
the platform frontend \cite{waf}. The WAF mitigates 
application-layer attacks such as SQL injection, 
Cross-Site Scripting (XSS), brute-force attempts, and 
anomalous traffic patterns, covering both web panel 
access and backend API requests, thereby reducing the 
public attack surface and reinforcing internal 
authentication and authorization mechanisms.

\begin{figure}[htb]
    \centering
    \includegraphics[width=0.7\linewidth]{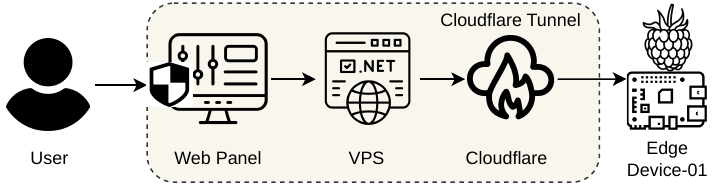}
    \caption{Secure communication model}
    \label{fig:ecos_communication}
\end{figure}

As illustrated in Figure~\ref{fig:ecos_communication}, 
connectivity between the backend control plane and edge devices 
is established through outbound encrypted tunnels initiated from 
the client side \cite{cloudflare_tunnel}. This approach avoids 
exposing inbound services at client environments, reducing the 
external attack surface and simplifying integration in networks 
protected by firewalls or NAT.

Direct administrative access to edge devices is restricted by 
design. Password-based authentication is disabled, and all 
endpoint interactions are mediated through key-based mechanisms 
initiated exclusively by the backend control plane. This ensures 
that actions occur only within authorized platform workflows.

All communication between the backend and edge devices is 
performed via HTTPS requests transmitted through the secure 
tunnel. The backend issues authenticated requests using standard 
methods such as \texttt{POST} and \texttt{GET}, while edge 
devices respond through webhook-based mechanisms. This enables 
asynchronous interaction in which task execution and result 
delivery do not require a persistent connection. 
Table~\ref{tab:api_endpoints} shows representative examples of 
the edge device API interface:

\begin{table}[htbp]
\centering
\caption{Representative Edge Device API Endpoints}
\label{tab:api_endpoints}
\footnotesize
\begin{tabular}{p{1.3cm} p{1.5cm} p{8.9cm}}
\hline
\hline
\textbf{Request} & \textbf{Method} & \textbf{Response (JSON)} \\
\hline
/health & GET & 
\{ "status": "OK" \} \\
/arp\_scan & POST & 
\{ "hosts": [\{ "ip": "...", "mac": "...", 
"vendor": "..." \}], 
"execution\_time": "..." \} \\
\hline
\hline
\end{tabular}
\end{table}

To further protect sensitive operational data, a 
lightweight application-level obfuscation mechanism 
is applied to selected information fields, 
complementing transport-layer encryption while 
remaining computationally efficient for 
resource-constrained devices. Tenant isolation is enforced at the connectivity 
layer. Each edge device is explicitly associated with 
a single tenant, and all requests issued by the 
backend include contextual information previously 
validated by the identity service, preventing any 
cross-tenant interaction between client organizations.

Overall, \textit{Argos}'s secure communication 
model combines WAF-based perimeter protection, 
outbound-only encrypted device connectivity, and 
tenant-scoped communication channels to enable safe, 
scalable, and long-term operation of distributed 
security auditing devices.

\subsection{Endpoint Execution Environment}
\label{subsec:endpoint}

The endpoint execution environment constitutes the 
distributed layer responsible for receiving and 
executing tasks dispatched by the automation engine. 
Unlike traditional models based on persistent implants 
or high-privilege agents, the proposed architecture 
adopts an isolated, on-demand execution approach that 
minimizes the operational footprint on client devices.

Endpoints may be deployed on dedicated physical 
machines, virtual machines, or low-cost auditing 
appliances such as penetration testing dropboxes based 
on single-board computers \cite{artificesecurity}. 
This flexibility allows the platform to adapt to 
diverse operational contexts while preserving a 
consistent execution model. A Linux-based operating 
system is required, and endpoints must support 
container execution technologies and establish secure 
connectivity with the \textit{Argos} control plane 
through encrypted tunnels.

\begin{figure}[h!]
    \centering
    \includegraphics[width=0.6\linewidth]{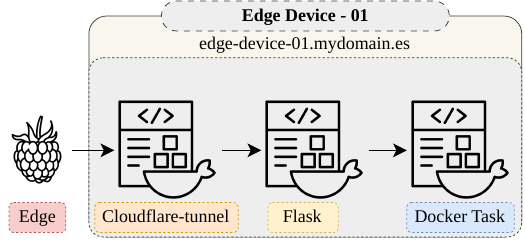}
    \caption{Edge device internal execution flow.}
    \label{fig:edge_execution}
\end{figure}

As illustrated in Figure~\ref{fig:edge_execution}, 
each edge device comprises two core containerized 
components: a secure tunnel service that maintains 
the outbound encrypted channel with the control 
plane, and a lightweight execution agent that 
receives authorized task requests and coordinates 
their execution. The agent exposes a controlled 
webhook interface through which the backend submits 
task requests. Upon receiving a valid request, the 
agent performs the following operations:

\begin{enumerate}
  \item Validates the structure and authenticity of 
  the incoming request.
  \item Dynamically instantiates an ephemeral 
  container associated with the requested task.
  \item Injects the required parameters as 
  environment variables or controlled arguments.
  \item Supervises execution and captures the 
  structured output.
  \item Terminates the container immediately upon 
  task completion.
\end{enumerate}

\begin{figure} [h!]
    \centering
    \includegraphics[width=0.7\linewidth]{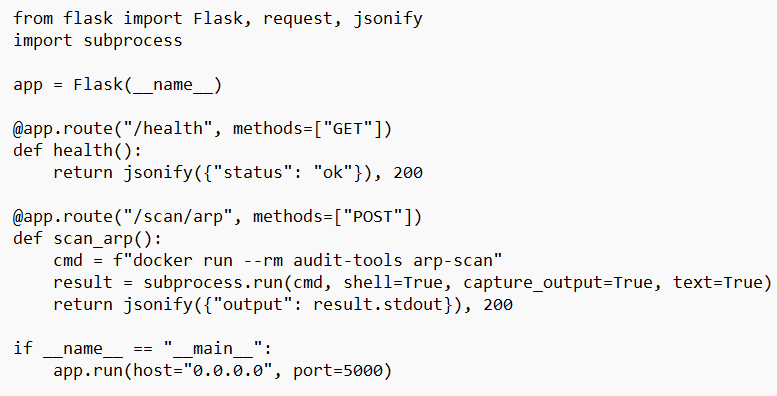}
    \caption{Flask-based execution agent receiving task requests over HTTPS and delegating them to ephemeral containers.}
    \label{fig:flask}
\end{figure}

Figure~\ref{fig:flask} shows a representative 
implementation of the execution agent, illustrating 
how task requests are received via HTTPs endpoints 
and delegated to isolated containers for execution.

From a security perspective, this execution model 
significantly reduces the attack surface of the 
edge device:

\begin{itemize}
  \item Auditing tools are not installed directly 
  on the host operating system.
  \item No services are exposed toward the client's 
  local network.
  \item All tool dependencies are encapsulated 
  within container images.
  \item Executions do not maintain persistent 
  sessions on the host.
\end{itemize}

This design simplifies node maintenance and allows tools or task behaviors to be updated by modifying container images, without requiring direct intervention in the client environment. Furthermore, by delegating heavy coordination and persistence logic to the cloud, the architecture accommodates the severe computational and latency constraints prevalent in SME edge deployments \cite{homaei2026mosaic}. Each device also implements a periodic heartbeat mechanism through which it reports its operational status to the control 
plane. This signal enables detection of connectivity failures, 
tunnel disruptions, or local execution issues, reinforcing 
centralized supervision without requiring direct access to client 
environments.

Through hardened provisioning, containerized on-demand execution, 
and controlled task mediation, the endpoint execution environment 
of \textit{Argos} enables safe, repeatable, and low-impact 
security auditing. This makes the platform particularly suited 
for long-term deployments where stability, isolation, and minimal 
operational overhead are essential.

\section{Formal Model}
\label{sec:formal}

The architectural description in Section~\ref{sec:arch} specifies the
functional behaviour of \textit{Argos} but leaves its central
security claim, namely the strict separation of resources across
organizational boundaries, expressed only in narrative form. This
section introduces a set-theoretic model of the platform's
authorization mechanism that unifies the role-based and multi-tenant
control dimensions into a single decision predicate. On this basis, a
tenant non-interference property is stated and formally proven,
providing the analytical counterpart to the empirical isolation
results reported in Section~\ref{sec:results}. The operational models
governing reliability, latency, availability, and scalability are
developed in the subsequent subsections.

\subsection{System Model and Notation}
\label{subsec:formal_model}

The platform is modeled as an access-controlled system operating over
a finite collection of principals, resources, and operations. Let $U$
denote the set of users, $R$ the set of roles, $T$ the set of tenants,
$D$ the set of registered edge devices, and $O$ the set of operations
that may be requested through the control plane. Let $\mathcal{X}$
denote the set of protected objects, comprising both edge devices and
data resources such as tasks, execution results, logs, and dashboard
views, so that $D \subseteq \mathcal{X}$.

Four total functions bind principals and objects to their governing
attributes, as defined in Equation~\eqref{eq:mappings}:
\begin{equation}
\label{eq:mappings}
\rho : U \rightarrow R, \quad
\tau : U \rightarrow T, \quad
\theta : D \rightarrow T, \quad
\kappa : \mathcal{X} \setminus D \rightarrow T .
\end{equation}
The role assignment $\rho$ maps each user to exactly one role in
$R = \{\textsf{admin}, \textsf{operator}, \textsf{readonly}\}$,
reflecting the RBAC model of Section~\ref{subsec:access}. The membership
function $\tau$ assigns each user to a single tenant, in accordance
with the constraint that every user belongs to exactly one
organization. The functions $\theta$ and $\kappa$ associate each edge
device and each data resource, respectively, with the tenant that owns
it. Totality of these functions formalizes the invariant that no
principal or object may exist outside a tenant scope.

Permitted operations are captured by a permission assignment relation,
given in Equation~\eqref{eq:pa}:
\begin{equation}
\label{eq:pa}
PA \subseteq R \times O ,
\end{equation}
where $(r, o) \in PA$ indicates that role $r$ is authorized to perform
operation $o$. The relation $PA$ is defined administratively and
encodes, for instance, that the \textsf{readonly} role is excluded from
any task-execution or configuration operation.

To reason uniformly over both devices and data resources, a
tenant-scope function $\sigma : \mathcal{X} \rightarrow T$ unifies
$\theta$ and $\kappa$, as expressed in Equation~\eqref{eq:sigma}:
\begin{equation}
\label{eq:sigma}
\sigma(x) =
\begin{cases}
\theta(x), & x \in D, \\[2pt]
\kappa(x), & x \in \mathcal{X} \setminus D .
\end{cases}
\end{equation}
The function $\sigma$ therefore returns the owning tenant of any
protected object, whether it is a physical edge device or a logical
data record scoped by \texttt{tenant\_id}.

\subsection{Authorization Predicate}
\label{subsec:formal_auth}

A request submitted to the control plane is represented as a triple
$q = (u, o, x)$, where $u \in U$ is the requesting user, $o \in O$ is
the requested operation, and $x \in \mathcal{X}$ is the target object.
The backend acts as the single authorization decision point, as
described in Section~\ref{subsec:authn}. Its decision is defined by the
authorization predicate of Equation~\eqref{eq:auth}:
\begin{equation}
\label{eq:auth}
\mathrm{Auth}(u, o, x) \;\Longleftrightarrow\;
\underbrace{\bigl(\rho(u), o\bigr) \in PA}_{\text{role permission}}
\;\wedge\;
\underbrace{\tau(u) = \sigma(x)}_{\text{tenant binding}} .
\end{equation}
Equation~\eqref{eq:auth} makes explicit that authorization is granted
only when both dimensions hold simultaneously. The first conjunct
enforces the RBAC constraint, determining \textit{what} a user may do;
the second conjunct enforces the multi-tenant constraint, determining
\textit{with which resources} the action may be performed. These two
dimensions correspond precisely to the two-dimensional authorization
boundary introduced informally in Section~\ref{subsec:access}. A request is
executed by the control plane if and only if
$\mathrm{Auth}(u, o, x) = \top$.

\subsection{Tenant Isolation Guarantee}
\label{subsec:formal_isolation}

The tenant binding conjunct of Equation~\eqref{eq:auth} is the
structural mechanism that prevents any interaction across
organizational boundaries. This behaviour is formalized as the
following non-interference property.

\begin{theorem}[Tenant Non-Interference]
\label{thm:noninterference}
Let $u \in U$ be an arbitrary user and $x \in \mathcal{X}$ an arbitrary
protected object such that $\sigma(x) \neq \tau(u)$. Then, for every
operation $o \in O$,
\begin{equation}
\label{eq:noninterference}
\mathrm{Auth}(u, o, x) = \bot .
\end{equation}
\end{theorem}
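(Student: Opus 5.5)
The argument unfolds the authorization predicate of Equation~\eqref{eq:auth} and examines its tenant-binding conjunct directly. Fix an arbitrary user $u \in U$ and an arbitrary object $x \in \mathcal{X}$ with $\sigma(x) \neq \tau(u)$, and let $o \in O$ be arbitrary. First we check that every term is well defined. Totality of $\rho$ and $\tau$ in Equation~\eqref{eq:mappings} gives that $\rho(u)$ and $\tau(u)$ exist and are unique. The case split of Equation~\eqref{eq:sigma} gives that $\sigma(x)$ is uniquely determined. Either $x \in D$, and then $\sigma(x) = \theta(x)$, or $x \in \mathcal{X} \setminus D$, and then $\sigma(x) = \kappa(x)$. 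These two cases are disjoint and exhaustive, and each function is total on its domain. Hence the hypothesis $\sigma(x) \neq \tau(u)$ is a meaningful statement about two definite elements of $T$.

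Next we expand $\mathrm{Auth}(u,o,x)$ as the conjunction $\bigl(\rho(u),o\bigr) \in PA \,\wedge\, \tau(u) = \sigma(x)$. By hypothesis the second conjunct is false. A conjunction with a false conjunct is false, so $\mathrm{Auth}(u,o,x) = \bot$ whatever the truth value of the role-permission conjunct. In particular, the conclusion holds whether or not $(\rho(u),o) \in PA$, and so it holds for every role, including \textsf{admin}. Since $o$ was arbitrary, the statement holds for all $o \in O$. Since $u$ and $x$ were arbitrary subject to the hypothesis, the theorem follows. We would also remark on the operational reading: the control plane executes $q=(u,o,x)$ if and only if $\mathrm{Auth}(u,o,x)=\top$, so no cross-tenant request is ever executed.

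There is essentially no technical obstacle; the proof is a one-line propositional step once the definitions are in place. The only point needing care is the first step: making sure $\sigma$ is a genuine total function on all of $\mathcal{X}$, so that objects cannot escape the tenant binding by being unscoped. This rests on totality of $\theta$ and $\kappa$ and on $D \subseteq \mathcal{X}$. We will state this dependence explicitly, since it is exactly the invariant that the \texttt{tenant\_id} column enforces in the implementation.
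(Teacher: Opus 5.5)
Your proposal is correct and matches the paper's proof: expand Equation~\eqref{eq:auth} as a conjunction, observe that the hypothesis $\sigma(x) \neq \tau(u)$ makes the tenant-binding conjunct false, and conclude $\mathrm{Auth}(u,o,x) = \bot$ for every $o \in O$, whatever the role. Your preliminary well-definedness check on $\rho$, $\tau$ and $\sigma$ is a harmless addition; the paper leaves it implicit in the totality of the mappings.
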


\begin{proof}
By Equation~\eqref{eq:auth}, $\mathrm{Auth}(u, o, x)$ is the logical
conjunction of the role-permission term $(\rho(u), o) \in PA$ and the
tenant-binding term $\tau(u) = \sigma(x)$. By hypothesis,
$\sigma(x) \neq \tau(u)$, hence the tenant-binding term evaluates to
$\bot$. Since conjunction with $\bot$ yields $\bot$ irrespective of the
truth value of the remaining term, it follows that
$\mathrm{Auth}(u, o, x) = \bot$ for every $o \in O$, which establishes
Equation~\eqref{eq:noninterference}.
\end{proof}

Theorem~\ref{thm:noninterference} states that no user can perform any
operation on an object belonging to a different tenant, regardless of
that user's role. In particular, the guarantee holds even for the
\textsf{admin} role, whose elevated privileges satisfy the first
conjunct of Equation~\eqref{eq:auth} but cannot override the tenant
binding. The result therefore captures, as a formal property, the
cross-tenant access checks and cross-edge visibility checks that
returned zero violations in the empirical evaluation of
Section~\ref{sec:results}.

A direct consequence concerns the confidentiality of collected audit
data across organizations.

\begin{corollary}[Result Isolation]
\label{cor:result_isolation}
For any two distinct tenants $t_1, t_2 \in T$ with $t_1 \neq t_2$, no
user $u$ with $\tau(u) = t_1$ can read any data resource
$x \in \mathcal{X} \setminus D$ with $\kappa(x) = t_2$.
\end{corollary}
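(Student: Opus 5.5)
The corollary follows by specializing Theorem~\ref{thm:noninterference} to data resources and to read operations. The one modelling choice is how ``can read'' is interpreted. I take it to mean that there is some operation $o \in O$ classified as a read (for instance viewing a result, a log, or a dashboard) such that the control plane executes the request $(u, o, x)$. By Section~\ref{subsec:formal_auth}, a request is executed if and only if $\mathrm{Auth}(u, o, x) = \top$. So it suffices to show that $\mathrm{Auth}(u, o, x) = \bot$ for every $o \in O$. Doing this for all operations, rather than only read operations, means the argument does not depend on which operations count as reads.

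The steps are as follows.

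First, fix distinct tenants $t_1 \neq t_2$, a user $u$ with $\tau(u) = t_1$, and a data resource $x \in \mathcal{X} \setminus D$ with $\kappa(x) = t_2$.

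Second, since $x \notin D$, the second case of Equation~\eqref{eq:sigma} applies. This gives $\sigma(x) = \kappa(x) = t_2$.

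Third, it follows that $\sigma(x) = t_2 \neq t_1 = \tau(u)$. This is exactly the hypothesis of Theorem~\ref{thm:noninterference}.

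Fourth, the theorem then yields $\mathrm{Auth}(u, o, x) = \bot$ for every $o \in O$. In particular this holds for every read operation, so no such request is executed and $u$ cannot read $x$.

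There is no real technical obstacle here. The only point needing care is the bridge from the informal verb ``read'' to the formal predicate. This bridge relies on the modelling assumption, stated in Section~\ref{subsec:formal_auth}, that the backend is the single decision point and executes a request exactly when $\mathrm{Auth}$ holds. Any data access path that bypasses the control plane lies outside the model, so the corollary should be read as a statement within it.
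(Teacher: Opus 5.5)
Your proposal is correct and follows the paper's proof: you compute $\sigma(x)=\kappa(x)=t_2$ from Equation~\eqref{eq:sigma}, observe $\tau(u)=t_1\neq t_2$, and invoke Theorem~\ref{thm:noninterference} to obtain $\mathrm{Auth}=\bot$ at the single authorization decision point. The only difference is that you conclude denial for every $o\in O$ rather than for a fixed read operation $o_{\text{read}}$; this is a harmless strengthening, and it avoids having to specify which operations count as reads.
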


\begin{proof}
Let $o_{\text{read}} \in O$ denote a read operation on a data resource.
Since $\kappa(x) = t_2$ and $x \notin D$, Equation~\eqref{eq:sigma}
gives $\sigma(x) = t_2$. Because $\tau(u) = t_1 \neq t_2 = \sigma(x)$,
the hypothesis of Theorem~\ref{thm:noninterference} is satisfied, and
therefore $\mathrm{Auth}(u, o_{\text{read}}, x) = \bot$. As read access
is denied at the authorization decision point, no cross-tenant data
disclosure is possible.
\end{proof}

Corollary~\ref{cor:result_isolation} provides the analytical basis for
the result-scoping behaviour validated across the three organizational
deployments, in which execution histories, task outcomes, and
aggregated views remained strictly confined to their originating
tenant.

\subsection{Reliability Model}
\label{subsec:formal_reliability}

Beyond the isolation guarantees, the operational value of the platform
depends on the reliability of its task-execution mechanism. Each
scheduled execution dispatched by the automation engine is modeled as
an independent Bernoulli trial with an unknown success probability
$p \in [0, 1]$, where success denotes the correct completion of a task
and the return of a structured result to the control plane. For a
deployment comprising $|D|$ edge devices, each executing a fixed set of
$K$ daily tasks over an evaluation horizon of $\Delta$ days, the total
number of executions is given by Equation~\eqref{eq:trials}:
\begin{equation}
\label{eq:trials}
n = |D| \cdot K \cdot \Delta .
\end{equation}
Letting the random variable $S$ denote the number of successful
executions, $S$ follows a binomial distribution
$S \sim \mathrm{Bin}(n, p)$, and the maximum-likelihood estimator of
the success probability is the observed success ratio, as stated in
Equation~\eqref{eq:phat}:
\begin{equation}
\label{eq:phat}
\hat{p} = \frac{S}{n} .
\end{equation}
Because the estimated success probability is close to unity, the
symmetric Wald approximation is inaccurate, as it may yield an interval
exceeding the admissible range $[0, 1]$. The Wilson score interval is
therefore adopted, providing reliable coverage for proportions near the
boundary. For a confidence level $1 - \alpha$ with standard-normal
quantile $z = z_{1 - \alpha/2}$, the interval is defined by
Equation~\eqref{eq:wilson}:
\begin{equation}
\label{eq:wilson}
\mathrm{CI}_{1-\alpha}(p) =
\frac{
\hat{p} + \dfrac{z^{2}}{2n}
\;\pm\;
z \sqrt{\dfrac{\hat{p}(1 - \hat{p})}{n} + \dfrac{z^{2}}{4n^{2}}}
}{
1 + \dfrac{z^{2}}{n}
} .
\end{equation}
Equation~\eqref{eq:wilson} is applied in Section~\ref{sec:results} to
accompany the point estimate of the failure rate with a rigorous
interval characterization of execution reliability.

To assess whether the underlying hardware architecture influences
reliability, the executions are partitioned into two groups according
to processor family, ARM and x86. Let $O_{ij}$ denote the observed
count of outcome $j \in \{\text{success}, \text{failure}\}$ within
architecture group $i$, and let $E_{ij}$ denote the corresponding
count expected under the null hypothesis of architecture-independent
reliability. Pearson's chi-squared statistic is given by
Equation~\eqref{eq:chi2}:
\begin{equation}
\label{eq:chi2}
\chi^{2} = \sum_{i} \sum_{j}
\frac{\left(O_{ij} - E_{ij}\right)^{2}}{E_{ij}},
\qquad
E_{ij} = \frac{R_i \, C_j}{n} ,
\end{equation}
where $R_i$ and $C_j$ are the marginal totals of group $i$ and outcome
$j$, respectively. The statistic of Equation~\eqref{eq:chi2}, evaluated
with one degree of freedom, is used in Section~\ref{sec:results} to test the
hardware-agnosticism claim of the containerized execution model.

\subsection{Latency Independence of Execution Correctness}
\label{subsec:formal_latency}

A distinctive characteristic of the communication model described in
Section~\ref{subsec:comm} is its reliance on stateless HTTPS
request-response exchanges rather than persistent interactive sessions.
This subsection formalizes the resulting independence between network
latency and execution correctness, which underlies the platform's
suitability for geographically dispersed deployment.

Consider a single task dispatched from the control plane to an edge
device. Let $L$ denote the one-way network latency, $T_{\text{exec}}$
the local execution time on the device, and $T_{\text{ovh}}$ the
fixed processing overhead. The end-to-end completion time perceived by
the control plane is expressed by Equation~\eqref{eq:ttotal}:
\begin{equation}
\label{eq:ttotal}
T_{\text{total}} = 2L + T_{\text{exec}} + T_{\text{ovh}} .
\end{equation}
The result returned by the edge device is a deterministic function of
the requested task and the observable state of the target environment
at execution time, as expressed by Equation~\eqref{eq:result}:
\begin{equation}
\label{eq:result}
\mathcal{R} = f(\text{task}, \text{target state}) .
\end{equation}
Crucially, the right-hand side of Equation~\eqref{eq:result} contains
no dependence on $L$. Provided that the round-trip time does not exceed
the configured request timeout $T_{\text{to}}$, the latency affects
only the completion time in Equation~\eqref{eq:ttotal} and never the
computed result. This invariance is stated as the following property.

\begin{property}[Latency-Invariant Correctness]
\label{prop:latency}
For any two latency values $L_1, L_2$ satisfying
$2L_1 < T_{\text{to}}$ and $2L_2 < T_{\text{to}}$, the execution
results coincide, as expressed by Equation~\eqref{eq:latency_inv}:
\begin{equation}
\label{eq:latency_inv}
\mathcal{R}(L_1) = \mathcal{R}(L_2) .
\end{equation}
\end{property}

\noindent
Property~\ref{prop:latency} follows immediately from the absence of
$L$ in Equation~\eqref{eq:result}: since the result depends solely on
the task specification and the target state, varying the latency below
the timeout threshold cannot alter the outcome. This property explains
the empirical observation in Section~\ref{sec:results} that intercontinental
deployments, despite exhibiting higher round-trip times, preserve
execution correctness without instability.

\subsection{Availability and Failure Detection}
\label{subsec:formal_availability}

Centralized supervision relies on the periodic heartbeat mechanism
introduced in Section~\ref{subsec:endpoint}, through which each edge device
reports its operational status at a fixed interval $\tau_h$. The
availability of a device is modeled as a two-state continuous-time
Markov process alternating between an operational state and a failed
state, with failure rate $\lambda$ and recovery rate $\mu$. The
steady-state availability is given by Equation~\eqref{eq:availability}:
\begin{equation}
\label{eq:availability}
A = \frac{\mu}{\lambda + \mu}
  = \frac{\mathrm{MTBF}}{\mathrm{MTBF} + \mathrm{MTTR}} ,
\end{equation}
where $\mathrm{MTBF} = 1/\lambda$ is the mean time between failures and
$\mathrm{MTTR} = 1/\mu$ is the mean time to recovery. Equation~\eqref{eq:availability}
expresses availability as a function of the reliability of the client
environment, consistent with the observation in Section~\ref{sec:results}
that most failures originated from external factors such as power and
connectivity interruptions.

Under the periodic reporting scheme, a failure occurring at an
arbitrary instant within a reporting interval is detected at the next
scheduled heartbeat. Assuming the failure onset is uniformly
distributed over the interval, the expected detection delay is given by
Equation~\eqref{eq:detection}:
\begin{equation}
\label{eq:detection}
\mathbb{E}[t_{\text{detect}}] = \frac{\tau_h}{2},
\qquad
t_{\text{detect}} \leq \tau_h .
\end{equation}
Equation~\eqref{eq:detection} bounds the worst-case detection latency
by a single reporting interval, providing a deterministic guarantee on
the timeliness of the unavailability alerts generated by the control
plane.

\subsection{Scalability Bound}
\label{subsec:formal_scalability}

Finally, the sustainable scale of a single control-plane instance is
characterized in terms of the aggregate request rate it must process.
For a deployment of $N = |D|$ edge devices, the control plane receives
two classes of periodic traffic: heartbeat reports and scheduled task
results. The heartbeat arrival rate is given by
Equation~\eqref{eq:hb_rate}:
\begin{equation}
\label{eq:hb_rate}
\Lambda_{\text{hb}} = \frac{N}{\tau_h} ,
\end{equation}
while the aggregate rate of scheduled task executions, for $K$ daily
tasks per device distributed over a period of $86{,}400$ seconds, is
given by Equation~\eqref{eq:task_rate}:
\begin{equation}
\label{eq:task_rate}
\Lambda_{\text{task}} = \frac{N \cdot K}{86{,}400} .
\end{equation}
Let $C$ denote the maximum request-processing capacity of the control
plane, expressed in requests per second. The stability condition
requires that the aggregate offered load remain below capacity,
$\Lambda_{\text{hb}} + \Lambda_{\text{task}} \leq C$. Substituting
Equations~\eqref{eq:hb_rate} and \eqref{eq:task_rate} and solving for
$N$ yields the admissible device bound of
Equation~\eqref{eq:nmax}:
\begin{equation}
\label{eq:nmax}
N_{\max} =
\left\lfloor
\frac{C}{\dfrac{1}{\tau_h} + \dfrac{K}{86{,}400}}
\right\rfloor .
\end{equation}
Equation~\eqref{eq:nmax} provides an analytical estimate of the number
of edge devices a single control-plane instance can sustain as a
function of its processing capacity and the configured heartbeat and
scheduling parameters. This bound frames the scalability limits
discussed in Section~\ref{subsec:limitations} and motivates the horizontal
scaling considerations identified as future work in
Section~\ref{sec:conclusion}.

\subsection{Playbook Formalization}
\label{subsec:formal_playbook}

The automation engine described in Section~\ref{subsec3.4} coordinates
each security workflow through a playbook, whose informal
representation is given in Figure~\ref{fig:playbook}. To reason
precisely about the execution structure and termination of these
workflows, a playbook is modeled at two complementary levels: its
static task-dependency structure, captured as a directed acyclic
graph, and the dynamic lifecycle of a single execution, captured as a
deterministic finite automaton.

\subsubsection{Structural Model}
\label{subsubsec:playbook_dag}

The static structure of a playbook is represented as the directed
acyclic graph defined in Equation~\eqref{eq:playbook_dag}:
\begin{equation}
\label{eq:playbook_dag}
P = (V, E), \qquad
E \subseteq V \times V,
\end{equation}
where $V = \{ v_1, v_2, \dots, v_m \}$ is the finite set of playbook
steps, each corresponding to an atomic action such as a task dispatch,
a conditional branch, or a persistence operation, and an edge
$(v_i, v_j) \in E$ denotes that step $v_j$ may execute only after step
$v_i$ has completed. The acyclicity constraint of
Equation~\eqref{eq:dag_acyclic} forbids cyclic dependencies among
steps:
\begin{equation}
\label{eq:dag_acyclic}
\nexists\, (v_{i_1}, v_{i_2}, \dots, v_{i_k}) : \;
(v_{i_j}, v_{i_{j+1}}) \in E \;\wedge\; v_{i_1} = v_{i_k} .
\end{equation}
Acyclicity guarantees that a playbook admits at least one topological
ordering of its steps, which the automation engine follows during
execution. Conditional logic is represented by branch nodes whose
outgoing edges are labeled by mutually exclusive guard predicates: for
a branch node $v_i$ with successors $\{ v_{j_1}, \dots, v_{j_d} \}$,
the associated guards $g_{j_1}, \dots, g_{j_d}$ satisfy the
completeness and exclusivity condition of
Equation~\eqref{eq:guards}:
\begin{equation}
\label{eq:guards}
\bigvee_{l=1}^{d} g_{j_l} = \top
\qquad \text{and} \qquad
g_{j_a} \wedge g_{j_b} = \bot \;\; \forall\, a \neq b ,
\end{equation}
ensuring that exactly one outgoing branch is selected at runtime,
so that the execution path through $P$ is deterministic once the
observable state is fixed.

\subsubsection{Execution Lifecycle Model}
\label{subsubsec:playbook_dfa}

While the DAG captures the dependency structure, the progression of an
individual execution instance is modeled as the deterministic finite
automaton of Equation~\eqref{eq:playbook_dfa}:
\begin{equation}
\label{eq:playbook_dfa}
M = (Q, \Sigma, \delta, q_0, F),
\end{equation}
whose components are instantiated as follows. The state set
$Q = \{ q_0, q_{\mathrm{val}}, q_{\mathrm{disp}}, q_{\mathrm{exec}},
q_{\mathrm{pers}}, q_{\mathrm{done}}, q_{\mathrm{fail}} \}$ corresponds
respectively to the idle, validated, dispatched, executing, persisted,
completed, and failed states of an execution. The input alphabet
$\Sigma = \{ \mathsf{trigger}, \mathsf{authorize}, \mathsf{ack},
\mathsf{result}, \mathsf{store}, \mathsf{error}, \mathsf{timeout} \}$
represents the events observed by the control plane. The initial state
is $q_0$, and the set of accepting states is
$F = \{ q_{\mathrm{done}}, q_{\mathrm{fail}} \}$, reflecting that every
execution terminates in either a completed or a failed condition. The
transition function $\delta$ follows the nominal path of
Equation~\eqref{eq:dfa_path}:
\begin{equation}
\label{eq:dfa_path}
q_0 \xrightarrow{\mathsf{trigger}} q_{\mathrm{val}}
\xrightarrow{\mathsf{authorize}} q_{\mathrm{disp}}
\xrightarrow{\mathsf{ack}} q_{\mathrm{exec}}
\xrightarrow{\mathsf{result}} q_{\mathrm{pers}}
\xrightarrow{\mathsf{store}} q_{\mathrm{done}} ,
\end{equation}
where each transition corresponds to one of the five orchestration
stages enumerated in Section~\ref{subsec3.4}, namely task trigger,
validation, dispatch, execution with result collection, and
persistence. Any $\mathsf{error}$ or $\mathsf{timeout}$ event observed
in a non-accepting state $q \in Q \setminus F$ drives the automaton to
the failed state, as expressed by Equation~\eqref{eq:dfa_fail}:
\begin{equation}
\label{eq:dfa_fail}
\delta(q, e) = q_{\mathrm{fail}}, \qquad
\forall\, q \in Q \setminus F, \;\;
e \in \{ \mathsf{error}, \mathsf{timeout} \} .
\end{equation}
Because $Q$ is finite and every $\mathsf{error}$ or $\mathsf{timeout}$
transition leads directly to the absorbing state $q_{\mathrm{fail}}$
while the nominal path of Equation~\eqref{eq:dfa_path} is strictly
progressing, the automaton reaches an accepting state in at most
$|Q| - 1$ transitions. This yields the termination guarantee stated in
Property~\ref{prop:termination}.

\begin{property}[Playbook Termination]
\label{prop:termination}
For any playbook whose structure satisfies
Equations~\eqref{eq:playbook_dag}--\eqref{eq:guards}, every execution
instance modeled by the automaton $M$ of
Equation~\eqref{eq:playbook_dfa} reaches an accepting state in
$F = \{ q_{\mathrm{done}}, q_{\mathrm{fail}} \}$ within a bounded
number of transitions.
\end{property}

\noindent
Property~\ref{prop:termination} formalizes the operational observation
of Section~\ref{sec:results} that every one of the 810 scheduled
executions terminated in a definite outcome, either a recorded success
or a logged failure, with no execution left in an indeterminate state.
The DAG constraint of Equation~\eqref{eq:dag_acyclic} guarantees that
the dependency structure cannot induce non-termination, while the
absorbing failure state of Equation~\eqref{eq:dfa_fail} guarantees that
adverse events, such as the client-side tunnel disruptions reported in
Section~\ref{sec:results}, are mapped to a well-defined terminal
outcome rather than to an undefined execution state. Together, the
structural and lifecycle models provide a rigorous account of the
determinism, auditability, and termination of the platform's automated
workflows.

\subsection{Threat Model}
\label{subsec:formal_threat}

The preceding models characterize the correctness and reliability of
\textit{Argos} under benign operating conditions. This subsection
makes explicit the adversarial assumptions under which the security
properties of the platform are intended to hold, following the
standard practice of defining protected assets, adversary
capabilities, trust boundaries, and the resulting security objectives.

\subsubsection{Assets and Trust Boundaries}
\label{subsubsec:assets}

Four classes of protected assets are considered: (i) the
confidentiality and integrity of tenant-scoped audit data, comprising
task definitions, execution results, and logs identified by
\texttt{tenant\_id}; (ii) the authenticity of task-dispatch commands
issued by the control plane to edge devices; (iii) the credentials and
tokens governing authentication and authorization, in particular the
signed JSON Web Tokens introduced in Section~\ref{subsec:authn}; and
(iv) the availability of the orchestration service itself.

The system is partitioned into three trust domains. The
\emph{cloud control plane}, comprising the backend, the identity and
access management service, the automation engine, and the relational
database, is considered trusted and operates within a controlled
environment. Each \emph{edge device} constitutes a semi-trusted domain:
it is explicitly provisioned and authorized, yet deployed within a
client network outside the direct physical control of the platform
operator. The \emph{public network} interconnecting these domains is
considered fully untrusted. The boundaries between these domains
define the surfaces at which the adversary is assumed to operate.

\subsubsection{Adversary Model}
\label{subsubsec:adversary}

The network adversary is modeled according to the Dolev--Yao
formulation \cite{dolevyao1983}, and is assumed to exercise full
control over the untrusted public network. Formally, the adversary
capabilities are the set defined in
Equation~\eqref{eq:adversary}:
\begin{equation}
\label{eq:adversary}
\mathcal{A} = \{\, \textsf{eavesdrop},\ \textsf{intercept},\
\textsf{replay},\ \textsf{inject},\ \textsf{spoof} \,\} .
\end{equation}
Accordingly, the adversary may observe, delay, drop, replay, or forge
messages traversing the public network, and may attempt to impersonate
either a legitimate user or an edge device. In addition, a
\emph{malicious tenant} is considered: an authenticated principal
$u$ with a valid role assignment $\rho(u)$ and tenant membership
$\tau(u)$ who attempts to access resources belonging to a distinct
tenant, thereby probing the isolation boundary formalized in
Section~\ref{subsec:formal_isolation}.

Three categories of threat are explicitly placed outside the scope of
this model. First, a full compromise of the cloud control plane is
excluded, as it constitutes the root of trust of the architecture.
Second, physical compromise of the internal hardware of an edge device,
including extraction of provisioned key material, is not addressed;
the device is assumed to preserve the integrity of its outbound tunnel
service. Third, denial-of-service attacks that saturate the client
network are acknowledged as an availability concern but are not
countered by the present design, since connectivity quality at the
client side lies outside the platform's control, as observed
empirically in Section~\ref{sec:results}.

\subsubsection{Security Objectives and Mitigations}
\label{subsubsec:objectives}

Under the adversary of Equation~\eqref{eq:adversary}, the platform is
designed to uphold the security objectives summarized in
Table~\ref{tab:threat_mitigation}, each mapped to the architectural
mechanism that enforces it. Confidentiality and integrity of data in
transit rely on the encrypted outbound tunnel and transport-layer
security described in Section~\ref{subsec:comm}, which neutralize the
\textsf{eavesdrop} and \textsf{intercept} capabilities. Message
authenticity is enforced through key-based mediation and the
disabling of password authentication on edge devices, countering
\textsf{inject} and \textsf{spoof}. Token integrity is guaranteed by
the digital signature of the JWT, whose verification by the backend
detects any forgery or tampering. Crucially, the malicious-tenant
threat is neutralized not by a perimeter control but by the
authorization predicate itself: Theorem~\ref{thm:noninterference}
guarantees that any cross-tenant request evaluates to $\bot$
regardless of the requester's role, so that isolation holds as a
structural property rather than a configurable policy.

\begin{table}[htbp]
\centering
\caption{Mapping of adversary capabilities to security objectives and
enforcing mechanisms}
\label{tab:threat_mitigation}
\footnotesize
\begin{tabular}{p{3cm} p{4cm} p{5.0cm}}
\hline
\hline
\textbf{Adversary capability} &
\textbf{Security objective} &
\textbf{Enforcing mechanism} \\
\hline
Eavesdrop, Intercept &
Data confidentiality and integrity in transit &
Outbound encrypted tunnel and TLS
(Sec.~\ref{subsec:comm}) \\
Inject, Spoof &
Command authenticity &
Key-based mediation; password authentication
disabled on endpoints \\
Replay &
Freshness of requests &
Token expiry and signed, time-bounded JWT claims
(Sec.~\ref{subsec:authn}) \\
Spoof (external user) &
Perimeter protection &
Web Application Firewall filtering malicious
traffic (Sec.~\ref{subsec:comm}) \\
Malicious tenant &
Tenant isolation &
Authorization predicate
Eq.~\eqref{eq:auth} and
Theorem~\ref{thm:noninterference} \\
\hline
\hline
\end{tabular}
\end{table}

Table~\ref{tab:threat_mitigation} therefore aligns each adversarial
capability with a concrete architectural countermeasure, and grounds
the tenant-isolation objective in the formal guarantee of
Section~\ref{subsec:formal_isolation}. The empirical validation of
these objectives, in particular the zero cross-tenant violations
observed across 330 controlled operations, is reported in
Section~\ref{sec:results}.

\section{Methodology}
\label{sec:method}

\subsection{Evaluation Environment}
\label{subsec4.1}

The evaluation of \textit{Argos} was conducted 
across two sequential deployment phases. During 
the initial phase, the control plane was deployed 
on a locally managed infrastructure, enabling 
rapid iteration and functional validation. 
Following successful validation, the platform 
was migrated to a production-grade cloud 
environment, allowing the maintenance of 
independent development and production 
environments throughout the evaluation period.

To ensure the reproducibility of our empirical findings and maintain scientific transparency, all sanitized execution logs, statistical analysis scripts, and the evaluation framework data utilized in this study have been open-sourced. The complete dataset and methodology artifacts are publicly available in the project's GitHub repository at \url{https://github.com/adibartolo/Argos}.

\begin{figure}[h!]
        \centering 
        \includegraphics[width=0.7\linewidth]{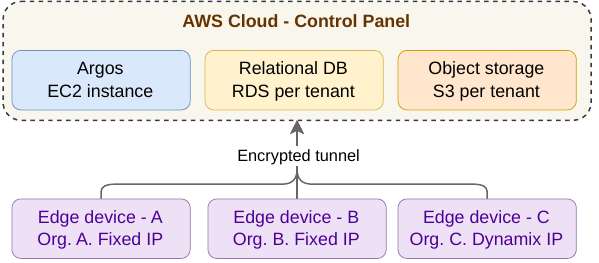}
        \caption{Production environment}
        \label{fig:environment}
    \end{figure}

As illustrated in Figure~\ref{fig:environment} the cloud-hosted control plane was deployed on 
an Amazon Web Services EC2 instance \cite{synergy2026}, providing a managed 
relational database (RDS) for persistent 
tenant-scoped storage and an object storage service 
(S3) for audit artifact retention.

Three edge devices were deployed across 
independent client environments, each assigned 
to a separate tenant, operating under 
heterogeneous network conditions including both 
fixed and dynamic IP addressing. Connectivity 
was established exclusively through outbound 
encrypted tunnels, eliminating the need for 
inbound port exposure on the client sides 
without architectural or configurations modifications.

\subsection{Deployment Scenario}
\label{subsec4.2}

The evaluation was structured to progressively validate 
functionality, multi-tenant isolation, and operational stability 
across three independent organizational environments.

Three tenants (Organization A, Organization B, and Organization 
C) were configured within the platform, each assigned a dedicated 
edge device and a fully isolated data space. As shown in 
Figure~\ref{fig:tenant-id}, each organization was provisioned 
with an independent identifier and a specific set of security 
modules. No information was shared across tenant boundaries, 
enabling assessment of data separation and organizational 
independence under concurrent operation.

\begin{figure} [htb!]
    \centering
    \includegraphics[width=1\linewidth]{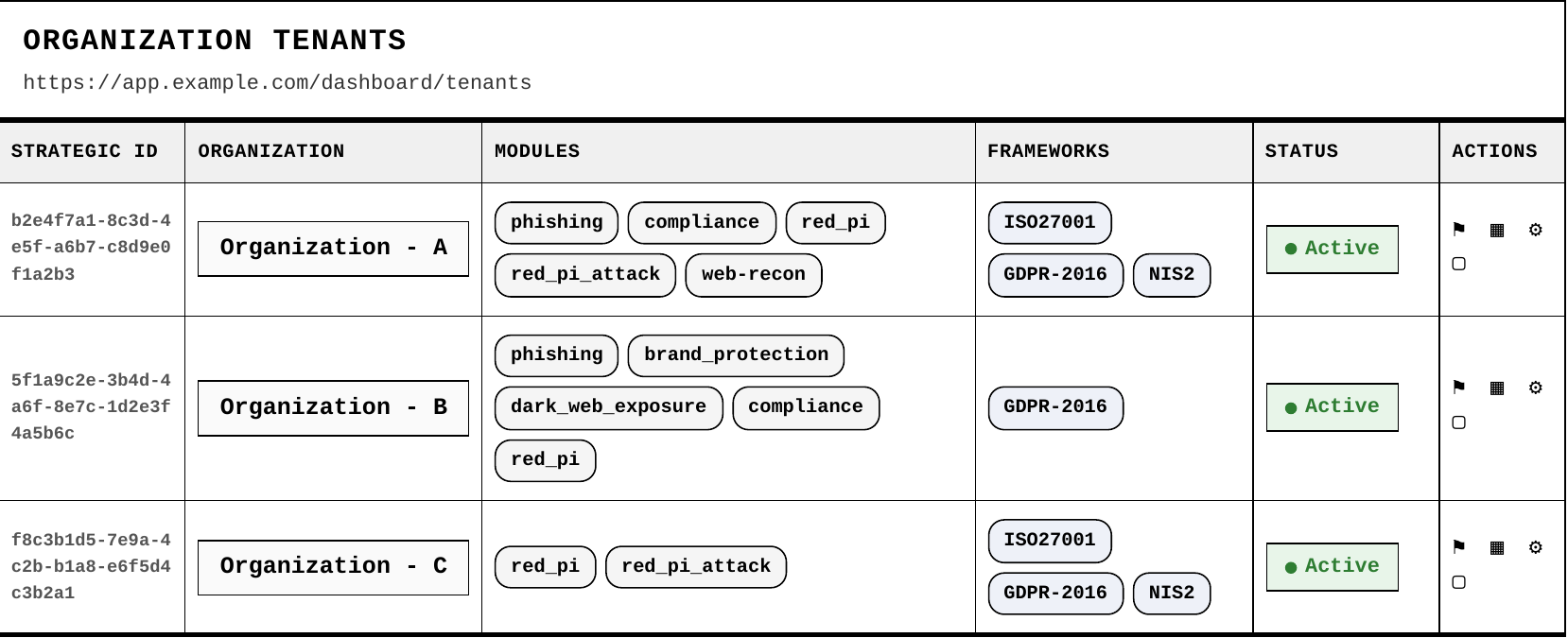}
    \caption{Organization Tenants}
    \label{fig:tenant-id}
\end{figure}

Figure~\ref{fig:devices} shows the three edge 
devices registered within the platform during 
the evaluation, displaying their operational 
status, hardware architecture, assigned tenant, 
and network addressing. The heterogeneous nature 
of the deployment is reflected in the mix of 
ARM and x86 architectures, as well as the 
combination of online and offline states 
observed during the evaluation period.

\begin{figure} [htb!]
    \centering
    \includegraphics[width=1\linewidth]{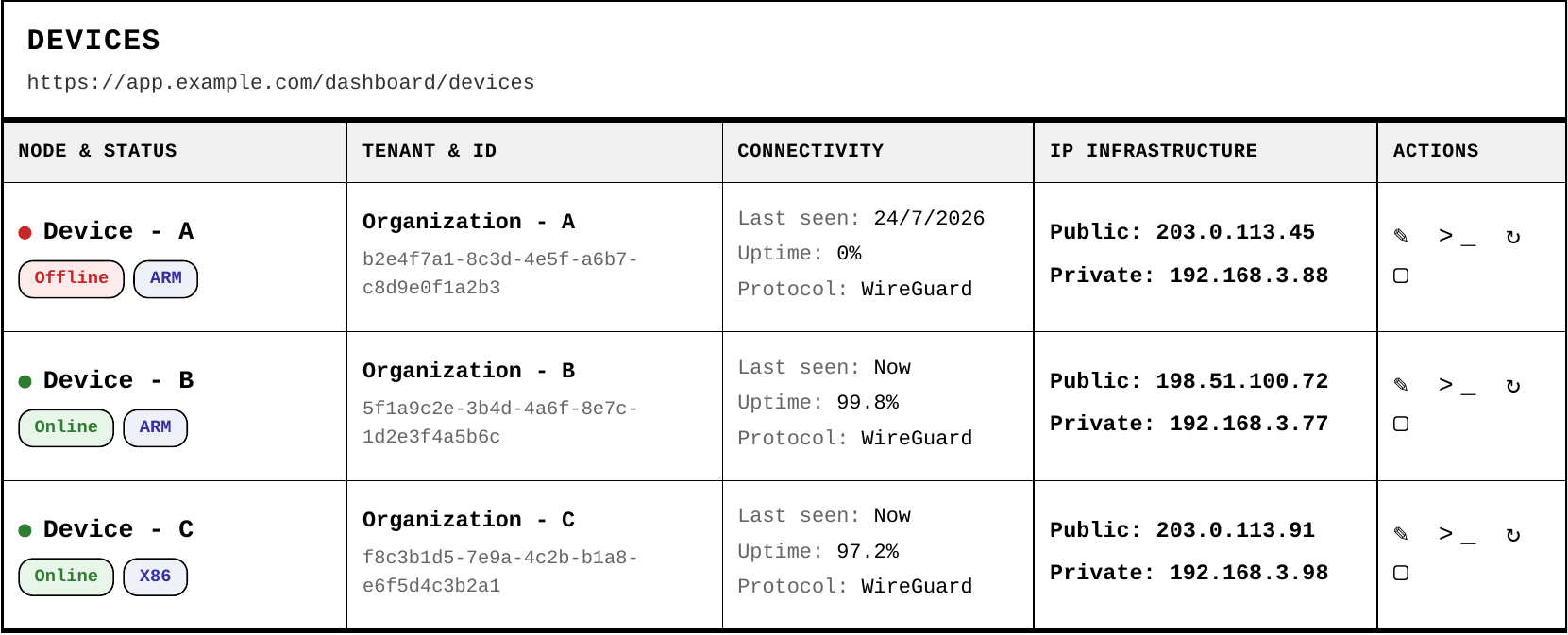}
    \caption{Devices Panel}
    \label{fig:devices}
\end{figure}

Two distinct user roles were exercised during the evaluation. An 
administrator held full access to platform configuration, tenant 
management, and backend services. A read-only role, assigned to 
client-facing users within each organization, was limited to 
viewing dashboards and execution results, with no visibility into 
the administrative panel or other tenants' data. This separation 
confirmed that the RBAC model enforces clear boundaries between 
technical and supervisory profiles.

As illustrated in Figure~\ref{fig:menu-admin-user}, the 
navigation menus exposed to each role differ substantially. The 
read-only user accesses only Dashboard, Modules, Alerts, 
Incidents, and Support. The administrator has additional access 
to Tenants, Devices, Hosts, SMTP configuration, and debug 
capabilities, none of which are visible to client-facing users.

\begin{figure}[htb!]
    \centering
    \includegraphics[width=0.6\linewidth]{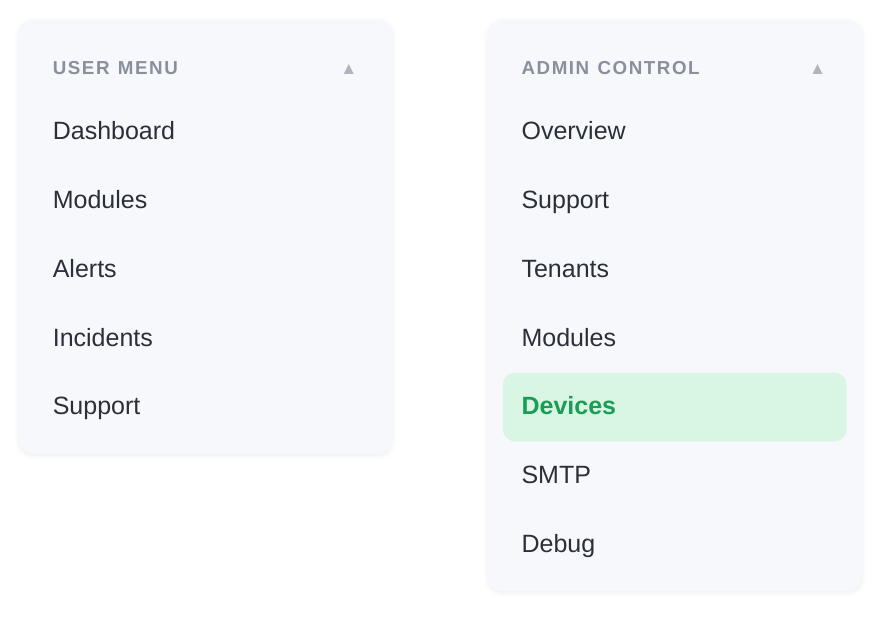}
    \caption{User Menu  vs Admin Control}
    \label{fig:menu-admin-user}
\end{figure}

All nine security tasks were deployed from the outset and 
executed automatically through scheduled daily routines, without 
manual operator intervention. Each task ran once per day per edge 
device over thirty days, yielding 30 executions per task per 
device and 90 executions per task across three endpoints, for a 
total of 810 automated executions. This schedule was designed to 
assess operational stability and execution consistency over a 
period representative of a real-world deployment month.

\subsection{Task Execution Workflow}
\label{subsec4.3}

Security tasks were selected and mapped to the 
MITRE ATT\&CK framework \cite{mitre} to provide 
a structured and reproducible reference for 
evaluating the platform's orchestration 
capabilities across representative attack 
surface coverage phases, including 
reconnaissance, enumeration, and credential 
access. All executions were performed 
automatically through the scheduled daily 
routines defined in the deployment scenario, 
with no manual operator intervention

\begin{figure} [h!]
    \centering 
    \includegraphics[width=1\linewidth]{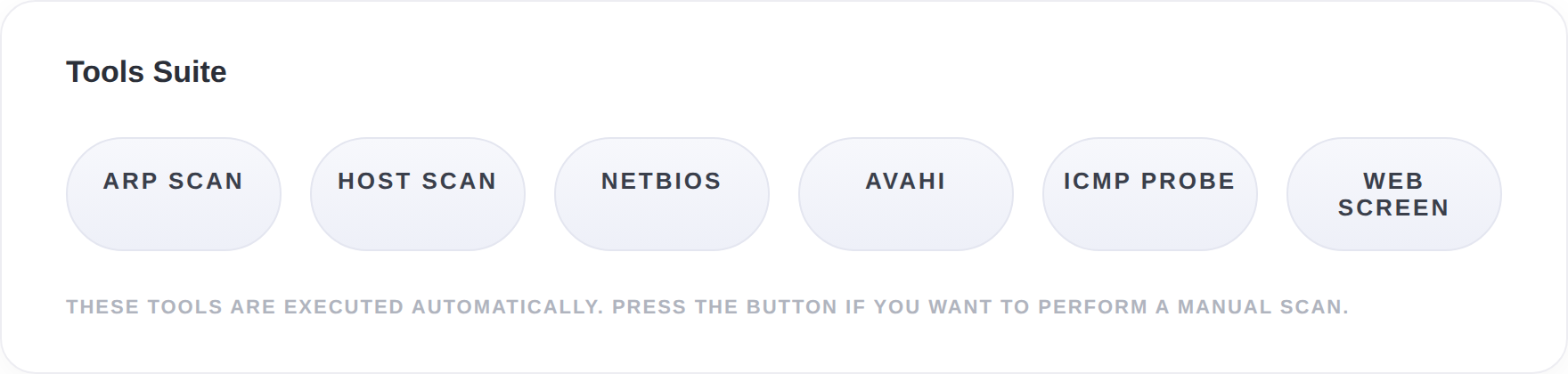}
    \caption{Recon Tools Suite}
    \label{fig:modulo-recon}
\end{figure}

\begin{figure} [h!]
    \centering
    \includegraphics[width=1\linewidth]{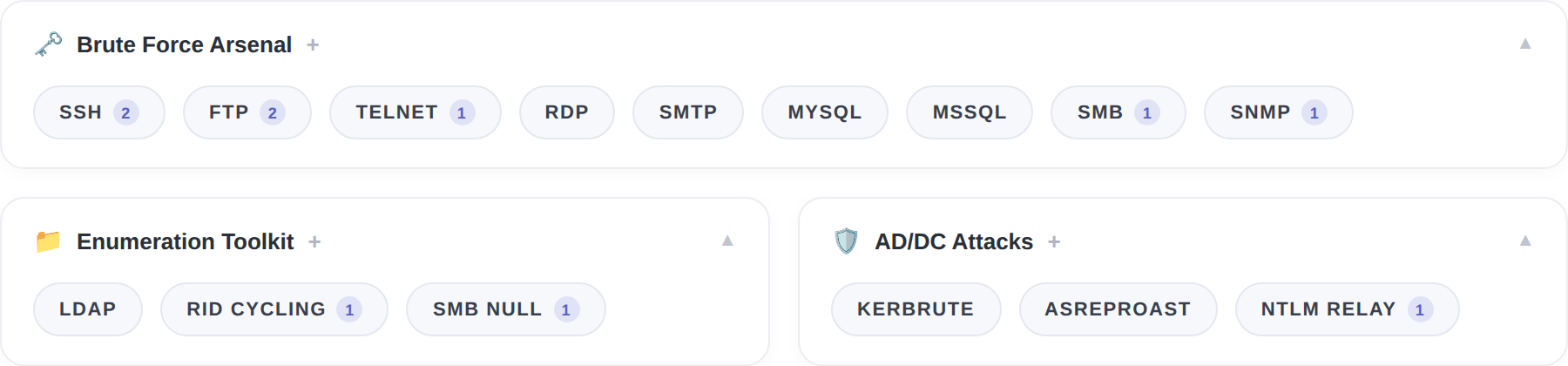}
    \caption{Attack Arsenal}
    \label{fig:modulos-attack}
\end{figure}

Figure~\ref{fig:modulo-recon} ~\ref{fig:modulos-attack} illustrates the 
security module categories available within 
the platform, organized into reconnaissance 
tools, brute force capabilities, enumeration 
toolkit, and AD/DC attack modules. The nine 
tasks selected for this evaluation represent 
a representative subset of the available 
toolset, covering reconnaissance, discovery, 
collection, initial access, and credential 
access phases of the MITRE ATT\&CK framework.

Table~\ref{tab:tasks} summarizes the nine tasks 
executed during the evaluation, their associated 
MITRE technique identifiers, and the number of 
executions performed across the three edge 
devices.

\begin{table}[htbp]
\centering
\caption{Security tasks
and executions per edge device}
\label{tab:tasks}
\footnotesize
\begin{tabular}{p{3.4cm} p{1.8cm} p{0.6cm} 
p{0.6cm} p{0.6cm} p{0.6cm}}
\hline
\hline
Task & MITRE ID & A & B & C & Total \\
\hline
Host IP \& ARP Discovery  & T1595.001 & 30 & 30 & 30 & 90 \\
Port \& Service Enum.     & T1595.002 & 30 & 30 & 30 & 90 \\
AD Enumeration            & T1087.002 & 30 & 30 & 30 & 90 \\
Default Credential Abuse  & T1078.001 & 30 & 30 & 30 & 90 \\
SMB Share Discovery       & T1135     & 30 & 30 & 30 & 90 \\
SNMP MIB Dump             & T1602.001 & 30 & 30 & 30 & 90 \\
Kerberos Bruteforce       & T1110     & 30 & 30 & 30 & 90 \\
AS-REP Roasting           & T1558.004 & 30 & 30 & 30 & 90 \\
Responder / LLMNR         & T1557.001 & 30 & 30 & 30 & 90 \\
\hline
\textbf{Total} & & \textbf{270} & \textbf{270} 
& \textbf{270} & \textbf{810} \\
\hline
\hline
\end{tabular}
\end{table}

Each task was implemented as an isolated 
container image invoked by the execution agent 
on the edge device upon receiving an authorized 
request from the automation engine. Upon 
completion, results were returned in structured 
JSON format to the control plane for storage 
and visualization.

\subsection{Data Collection and Monitoring}
\label{subsec4.4}

During the evaluation, \textit{Argos} 
continuously collected structured data for each 
task execution. Upon completion, each task 
generated a log entry containing a unique 
execution identifier, a timestamp, an execution 
status indicator, and the output data returned 
by the edge device in JSON format. This output 
was parsed by the automation engine and 
persistently stored in the centralized relational 
database, scoped to the corresponding tenant 
identifier. Both successful executions and 
failure conditions were recorded, enabling 
comprehensive visibility into platform behavior 
over the evaluation period.

The operational status of each edge device was 
monitored through a periodic heartbeat mechanism, 
reporting device availability to the control 
plane every five minutes. In the event of a 
missed report, the platform automatically 
generated a device unavailability alert, enabling 
detection of connectivity disruptions without 
requiring direct access to the client environment. 
During the evaluation period, isolated 
connectivity incidents were recorded across the 
three deployment sites, attributable to 
external factors such as power outages, voltage 
fluctuations, and internet service interruptions 
at the client side. These events were 
automatically detected and logged by the 
platform, and did not result in data loss or 
cross-tenant interference.

Collected data was exposed to users through the 
web-based dashboard according to their assigned 
role and tenant context. Read-only users could 
access execution histories, task outcomes, and 
aggregated result views restricted to their 
organization. Administrator users had access to 
additional platform information including device 
configuration and system settings. At no point 
during the evaluation was information shared or 
visible across tenant boundaries.

\begin{figure}[htb!]
    \centering
    \includegraphics[width=0.9\linewidth]{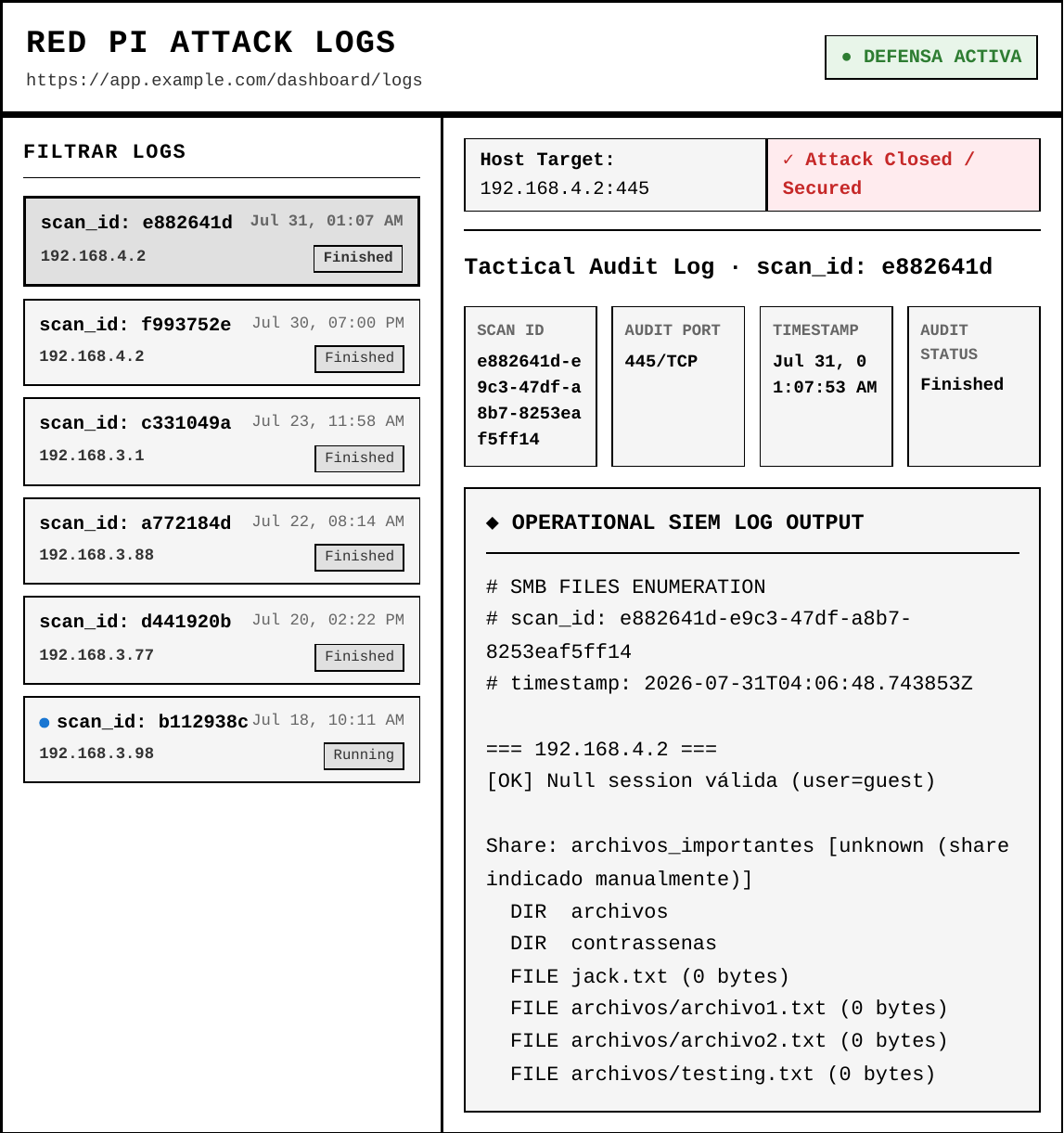}
    \caption{Example task execution log: SMB null 
    session enumeration result }
    \label{fig:execution_log}
\end{figure}

Figure~\ref{fig:execution_log} illustrates a representative 
execution log generated during the evaluation. Each entry 
includes a unique scan identifier, target host and port, 
timestamp, and audit status determined automatically by the 
platform. The structured SIEM output captures raw findings 
returned by the edge device. In this case, an SMB null session 
enumeration exposed three shared resources on the target host, 
classified by the platform as a security breach. This example 
shows that \textit{Argos} not only orchestrates task execution 
but also parses and classifies results, surfacing actionable 
findings directly through the dashboard.

In addition to per-execution logs, the platform generates 
structured security reports summarizing findings collected across 
tasks and edge devices within a tenant. These reports provide an 
auditable record of discovered hosts, identified services, and 
security findings, supporting the traceability and accountability 
requirements of authorized auditing operations.

\subsection{Evaluation Criteria}
\label{subsec4.5}

The evaluation of \textit{Argos} was conducted 
using eight operational criteria designed to 
validate the principal architectural guarantees 
of the platform and directly address the 
structural gaps identified in Section~\ref{sec:related}.

\textbf{Role-based access enforcement.} The RBAC model must 
enforce strict operational boundaries between user profiles. 
Administrator users have access to platform configuration, tenant 
management, and device settings. Read-only users are restricted 
to their organization's dashboards and results, with no 
visibility into administrative functions or other tenants' data.

\textbf{Tenant-device binding.} Task execution requests issued 
by a tenant must be delivered exclusively to the edge device 
associated with that tenant. Cross-tenant device routing is not 
permitted. A task initiated within Organization A must never be 
dispatched to a device belonging to Organization B or C.

\textbf{Result isolation.} Execution results, logs, and audit 
data collected by each edge device must remain strictly scoped 
to their originating tenant. Users of one organization must not 
be able to access data generated within another organization's 
scope, regardless of their assigned role.

\textbf{Cross-architecture compatibility.} The platform must 
orchestrate task execution consistently across heterogeneous 
hardware. Edge devices based on both ARM and x86 processors were 
included in the evaluation to confirm that the containerized 
execution model operates correctly regardless of the underlying 
hardware platform.

\textbf{Latency resilience.} Network latency introduced by 
geographically distributed deployments must not affect execution 
correctness. The platform's use of lightweight HTTPS exchanges 
rather than persistent sessions is designed to make execution 
outcomes independent of latency variability across regions.

\textbf{Dashboard result visibility.} Execution results must be 
correctly surfaced through the web dashboard for each tenant upon 
task completion. This validates end-to-end traceability from task 
dispatch to result visualization, ensuring findings are accessible 
to authorized users in a timely and organized manner.

\textbf{Automation effectiveness.} The platform supports both 
manual and scheduled task execution. Manual triggering was used 
for functional validation, while the automated daily scheduler 
assessed platform stability over the evaluation period. The 
scheduling engine is configurable by the administrator, allowing 
each organization to define custom execution routines adapted to 
their infrastructure and operational requirements.

\textbf{Execution repeatability.} Tasks must produce consistent 
and reliable outcomes across repeated executions and heterogeneous 
deployment environments. Each of the nine tasks was executed once 
per day per device over thirty days, enabling assessment of 
temporal stability and cross-environment consistency under real 
operational conditions.

Together, these criteria provide a comprehensive operational 
assessment of \textit{Argos}, directly addressing identified 
gaps in existing C2 platforms with respect to governance, 
isolation, automation, and long-term edge deployment.

\section{Results and Discussion}
\label{sec:results}

\subsection{Results}
\label{subsec5.1}

The RBAC model was validated by exercising two 
distinct user profiles throughout the evaluation. 
The administrator role provided full access to 
platform configuration, tenant management, and 
device settings, while the read-only role was 
restricted exclusively to dashboards and results 
scoped to their own tenant. As illustrated in 
Figure~\ref{fig:menu-admin-user}, the navigation 
menus exposed to each role differ substantially, 
confirming that the platform enforces strict 
functional boundaries between technical and 
supervisory profiles.

Tenant and device isolation were validated 
through controlled cross-tenant access attempts 
and cross-edge visibility checks. As summarized 
in Table~\ref{tab:tenant_isolation}, no 
isolation violations were observed, confirming 
that tenant-device binding and result scoping 
operate correctly under concurrent 
multi-organizational deployment.

\begin{table}[htb]
\centering
\caption{Multi-Tenant Isolation Validation 
Results per Organization}
\label{tab:tenant_isolation}
\footnotesize
\begin{tabular}{p{3.5cm} p{1.1cm} p{1.1cm} 
p{1.1cm} p{1.1cm} p{1.1cm}}
\hline
\hline
\textbf{Operation Type} & \textbf{Org A} & 
\textbf{Org B} & \textbf{Org C} & 
\textbf{Total} & \textbf{Violations} \\
\hline
File creation    & 30 & 30 & 30 & 90 & 0 \\
File modification & 30 & 30 & 30 & 90 & 0 \\
File deletion    & 30 & 30 & 30 & 90 & 0 \\
Cross-tenant access & 10 & 10 & 10 & 30 & 0 \\
Cross-edge check & 10 & 10 & 10 & 30 & 0 \\
\hline
\textbf{Total}   & \textbf{110} & \textbf{110} 
& \textbf{110} & \textbf{330} & \textbf{0} \\
\hline
\hline
\end{tabular}
\end{table}

All 810 task executions were performed 
automatically through the platform's scheduled 
daily automation engine, without manual operator 
intervention during the repeated execution phase. 
Each of the nine security tasks was executed once 
per day per edge device over thirty days, enabling 
assessment of both temporal stability and 
cross-environment consistency.

Table~\ref{tab:repeat_orga}, 
\ref{tab:repeat_orgb}, and 
\ref{tab:repeat_orgc} detail the execution 
results per task for each independent 
organizational deployment.

\begin{table}[htbp]
\centering
\caption{Task Execution Results ---
Organization A (ARM)}
\label{tab:repeat_orga}
\footnotesize
\begin{tabular}{p{3.5cm} p{2cm} p{1cm} p{0.6cm} p{0.85cm} p{1.6cm}}
\hline\hline
\textbf{Task} & \textbf{MITRE ID} & \textbf{Succ.} & \textbf{Fail.} & \textbf{Rate} & \textbf{95\% CI} \\
\hline
Host IP \& ARP Discovery    & T1595.001 & 30 & 0 & 100.0\% & [88.6, 100.0] \\
Port \& Service Enum.       & T1595.002 & 28 & 2 & 93.3\%  & [78.7, 98.2] \\
AD Enumeration              & T1087.002 & 29 & 1 & 96.7\%  & [83.3, 99.4] \\
Default Credential Abuse    & T1078.001 & 30 & 0 & 100.0\% & [88.6, 100.0] \\
SMB Share Discovery         & T1135     & 30 & 0 & 100.0\% & [88.6, 100.0] \\
SNMP MIB Dump               & T1602.001 & 28 & 2 & 93.3\%  & [78.7, 98.2] \\
Kerberos Bruteforce         & T1110     & 27 & 3 & 90.0\%  & [74.4, 96.5] \\
AS-REP Roasting             & T1558.004 & 30 & 0 & 100.0\% & [88.6, 100.0] \\
Responder / LLMNR           & T1557.001 & 26 & 4 & 86.7\%  & [70.3, 94.7] \\
\hline
\textbf{Total} & & \textbf{258} & \textbf{12} & \textbf{95.6\%} & \textbf{[92.4, 97.4]} \\
\hline\hline
\end{tabular}
\end{table}

\begin{table}[htbp]
\centering
\caption{Task Execution Results ---
Organization B (ARM)}
\label{tab:repeat_orgb}
\footnotesize
\begin{tabular}{p{3.5cm} p{2cm} p{1cm} p{0.6cm} p{0.85cm} p{1.6cm}}
\hline\hline
\textbf{Task} & \textbf{MITRE ID} & \textbf{Succ.} & \textbf{Fail.} & \textbf{Rate} & \textbf{95\% CI} \\
\hline
Host IP \& ARP Discovery    & T1595.001 & 30 & 0 & 100.0\% & [88.6, 100.0] \\
Port \& Service Enum.       & T1595.002 & 29 & 1 & 96.7\%  & [83.3, 99.4] \\
AD Enumeration              & T1087.002 & 29 & 1 & 96.7\%  & [83.3, 99.4] \\
Default Credential Abuse    & T1078.001 & 29 & 1 & 96.7\%  & [83.3, 99.4] \\
SMB Share Discovery         & T1135     & 30 & 0 & 100.0\% & [88.6, 100.0] \\
SNMP MIB Dump               & T1602.001 & 29 & 1 & 96.7\%  & [83.3, 99.4] \\
Kerberos Bruteforce         & T1110     & 28 & 2 & 93.3\%  & [78.7, 98.2] \\
AS-REP Roasting             & T1558.004 & 30 & 0 & 100.0\% & [88.6, 100.0] \\
Responder / LLMNR           & T1557.001 & 27 & 3 & 90.0\%  & [74.4, 96.5] \\
\hline
\textbf{Total} & & \textbf{261} & \textbf{9} & \textbf{96.7\%} & \textbf{[93.8, 98.2]} \\
\hline\hline
\end{tabular}
\end{table}

\begin{table}[htbp]
\centering
\caption{Task Execution Results ---
Organization C (x86)}
\label{tab:repeat_orgc}
\footnotesize
\begin{tabular}{p{3.5cm} p{2cm} p{1cm} p{0.6cm} p{0.85cm} p{1.6cm}}
\hline\hline
\textbf{Task} & \textbf{MITRE ID} & \textbf{Succ.} & \textbf{Fail.} & \textbf{Rate} & \textbf{95\% CI} \\
\hline
Host IP \& ARP Discovery    & T1595.001 & 29 & 1 & 96.7\%  & [83.3, 99.4] \\
Port \& Service Enum.       & T1595.002 & 30 & 0 & 100.0\% & [88.6, 100.0] \\
AD Enumeration              & T1087.002 & 29 & 1 & 96.7\%  & [83.3, 99.4] \\
Default Credential Abuse    & T1078.001 & 29 & 1 & 96.7\%  & [83.3, 99.4] \\
SMB Share Discovery         & T1135     & 30 & 0 & 100.0\% & [88.6, 100.0] \\
SNMP MIB Dump               & T1602.001 & 29 & 1 & 96.7\%  & [83.3, 99.4] \\
Kerberos Bruteforce         & T1110     & 29 & 1 & 96.7\%  & [83.3, 99.4] \\
AS-REP Roasting             & T1558.004 & 30 & 0 & 100.0\% & [88.6, 100.0] \\
Responder / LLMNR           & T1557.001 & 29 & 1 & 96.7\%  & [83.3, 99.4] \\
\hline
\textbf{Total} & & \textbf{264} & \textbf{6} & \textbf{97.8\%} & \textbf{[95.2, 99.0]} \\
\hline\hline
\end{tabular}
\end{table}

Across the three deployments, 783 successful
executions were recorded out of 810, yielding a
global success ratio of $\hat{p} = 0.9667$ and a
corresponding failure rate of 3.33\%, computed
through the maximum-likelihood estimator of
Equation~\eqref{eq:phat}. Because the estimated
success probability lies close to unity, interval
estimation was performed using the Wilson score
interval of Equation~\eqref{eq:wilson} rather than
the symmetric Wald approximation. At a 95\%
confidence level ($z = 1.96$), the resulting
interval for the failure rate is $[2.30\%,
4.81\%]$, confirming that the observed reliability
is tightly bounded rather than an artifact of a
single favourable run. The majority of observed
failures were attributable to external
connectivity factors, including internet service
interruptions, power outages, voltage
fluctuations, and tunnel disconnections at the
client side, rather than to architectural
deficiencies in the orchestration model. No
systematic execution degradation was observed over
the thirty-day evaluation period, confirming the
platform's operational stability under sustained
automated workloads.

Edge devices A and B operated on ARM-based
hardware while device C ran on an x86 virtualized
environment. Comparing results across
architectures, Organization C (x86) achieved the
lowest failure rate (2.2\%) versus Organizations A
and B (4.4\% and 3.3\% respectively). To determine
whether this difference reflects a genuine
architectural effect or ordinary sampling
variation, the executions were aggregated into a
two-way contingency table by processor family and
outcome, as reported in
Table~\ref{tab:arch_contingency}.

\begin{table}[htbp]
\centering
\caption{Contingency of execution outcomes by
hardware architecture}
\label{tab:arch_contingency}
\footnotesize
\begin{tabular}{p{3.0cm} p{1.1cm} p{1.0cm} p{0.8cm} p{0.9cm} p{1.5cm}}
\hline\hline
\textbf{Architecture} & \textbf{Success} & \textbf{Failure} & \textbf{Total} & \textbf{Rate} & \textbf{95\% CI} \\
\hline
ARM (Org A + B) & 519 & 21 & 540 & 96.1\% & [94.1, 97.4] \\
x86 (Org C)     & 264 &  6 & 270 & 97.8\% & [95.2, 99.0] \\
\hline
\textbf{Total}  & 783 & 27 & 810 & 96.7\% & [95.2, 97.7] \\
\hline\hline
\end{tabular}
\end{table}

Applying Pearson's chi-squared test of
Equation~\eqref{eq:chi2} to
Table~\ref{tab:arch_contingency} yields
$\chi^{2} = 1.55$ with one degree of freedom and
$p = 0.21$. As the statistic falls well below the
critical value of $3.84$ at the $0.05$
significance level, the null hypothesis of
architecture-independent reliability cannot be
rejected. The same conclusion is obtained under Fisher's
exact test ($p = 0.30$) and under the Yates
continuity-corrected statistic ($\chi^{2} = 1.08$,
$p = 0.30$); all expected cell counts exceed five,
so the chi-squared assumptions are satisfied. The observed difference is
therefore not statistically significant, and the
containerized execution model can be regarded as
hardware-agnostic for standard network security
tasks. The residual differential concentrates in
the Responder/LLMNR task, where ARM devices
accumulated four and three failures respectively
while x86 recorded only one; this behaviour is
attributable to network-interface parameter
modification requirements that are more disruptive
on physical ARM hardware than on virtualized x86
environments, rather than to a systematic
architectural limitation.

Figure~\ref{fig:latency} summarizes 
the average round-trip latency observed between 
the AWS cloud-hosted control plane and each 
edge device during the evaluation period. 
Latency was measured via the periodic heartbeat 
mechanism, which reports device availability 
every five minutes.

% \begin{table}[htbp]
% \centering
% \caption{Observed Latency Between Control 
% Plane and Edge Devices}
% \label{tab:latency_analysis}
% \footnotesize
% \begin{tabular}{p{2cm} p{2.5cm} p{1.8cm} p{3cm}}
% \hline
% \hline
% \textbf{Node} & \textbf{Region} & 
% \textbf{Avg Latency} & 
% \textbf{Operational Effect} \\
% \hline
% Org A & Europe (same region) & $\sim$20 ms & 
% Negligible delay \\
% Org B & South America       & $\sim$120 ms & 
% No instability \\
% Org C & South America       & $\sim$50 ms  & 
% No instability \\
% \hline
% \hline
% \end{tabular}
% \end{table}

\begin{figure}[htb!]
    \centering
    \includegraphics[width=0.8\linewidth]{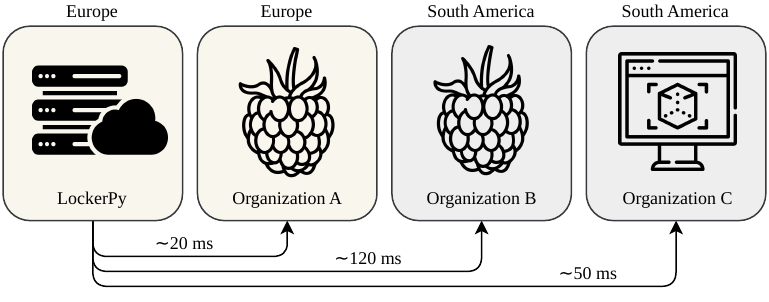}
    \caption{Observed Latency Between Control Plane and Edge Devices}
    \label{fig:latency}
\end{figure}

Despite the latency variability across 
deployments, no execution instability 
attributable to network delay was observed 
in any of the three organizational 
environments. The platform's reliance on 
lightweight HTTPS request-response exchanges 
with structured JSON payloads means that 
increased latency affects task initiation 
timing but does not degrade workflow 
integrity or execution correctness. 
Organization C, despite being located in
South America, benefited from an enhanced
international connectivity infrastructure,
resulting in a round-trip latency well below
that of typical intercontinental links. This confirms that 
\textit{Argos}'s communication model is 
resilient to heterogeneous network conditions 
and does not impose geographic constraints 
on edge device deployment.

Figure~\ref{fig:dashboard_recon} illustrates
the reconnaissance dashboard for Organization
A, showing aggregated metrics including 17
discovered hosts, 10 active endpoints, 594
total scan operations dispatched, and 17
open ports identified, all scoped exclusively
to that organization's edge device. In this
context a scan operation denotes an individual
scan action issued to the edge device and is
accumulated over the full thirty-day evaluation
period; since each scheduled reconnaissance task
probes multiple hosts and ports, the number of
scan operations is expectedly larger than the
count of scheduled task executions.

\begin{figure}[t!]
    \centering
    \includegraphics[width=1\linewidth]{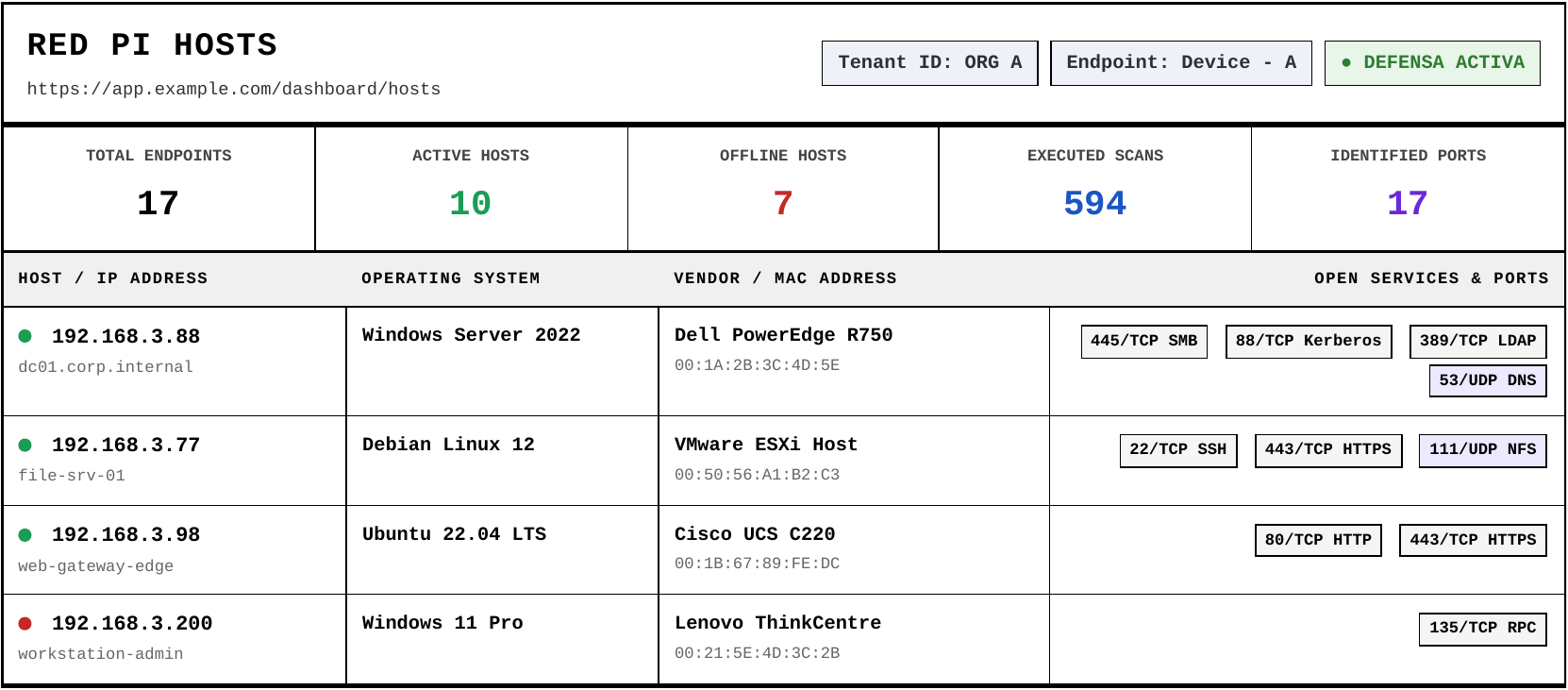}
    \caption{Hosts dashboard for 
    Organization A.}
    \label{fig:dashboard_recon}
\end{figure}

\textit{Argos} additionally generates 
structured infrastructure audit reports 
consolidating findings across tasks and 
edge devices within a tenant. As illustrated 
in Figure~\ref{fig:report_recon}, these 
reports provide an executive summary of 
discovered assets, active hosts, total scans, 
and exposed services, complemented by a 
discovery timeline and service breakdown. 
This reporting capability supports the 
auditability and traceability requirements 
of authorized security operations, enabling 
non-technical stakeholders to access 
high-level security insights in a structured 
and exportable format.

\begin{figure}[htb!]
    \centering
    \includegraphics[width=0.8\linewidth]{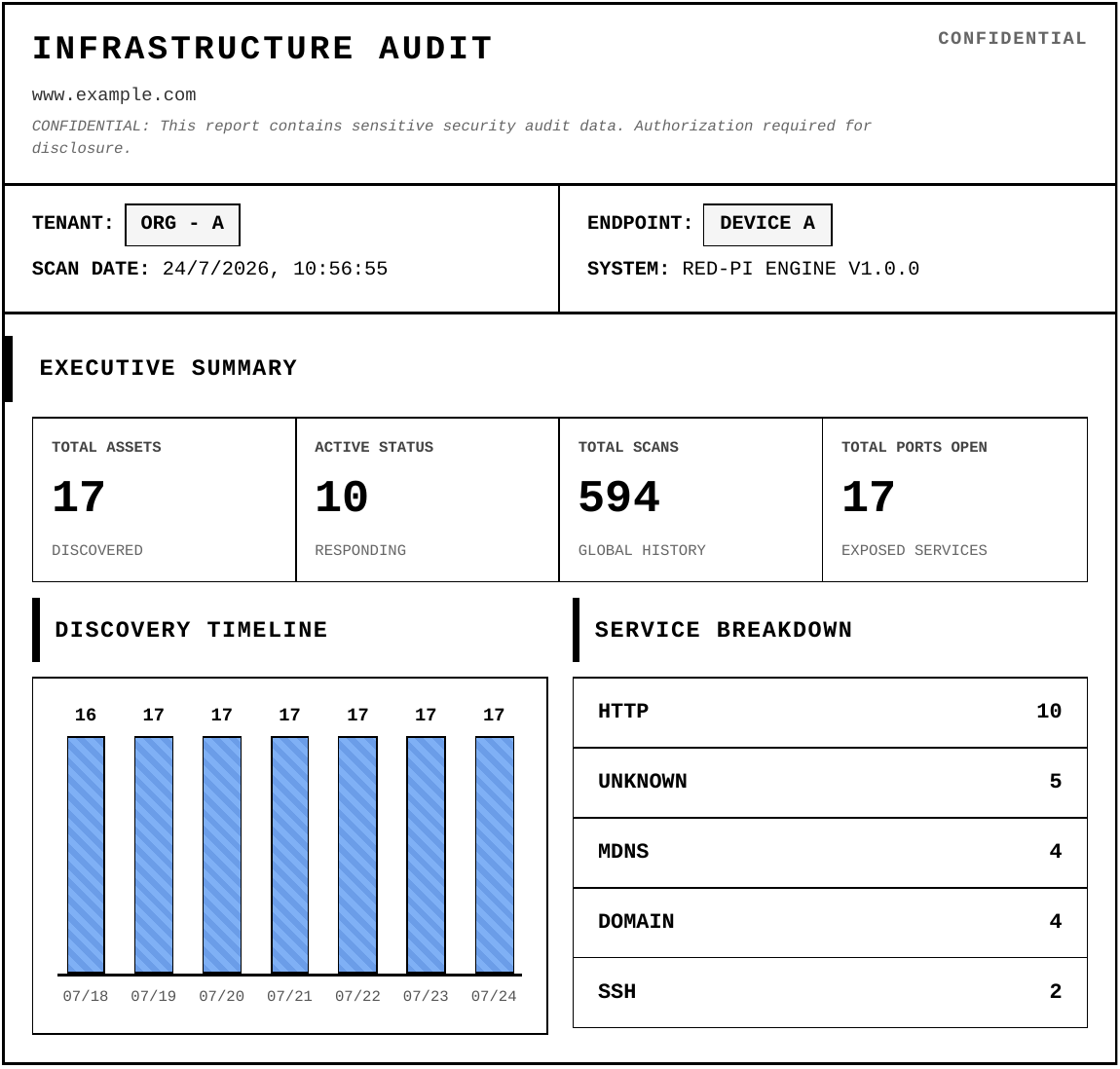}
    \caption{Infrastructure audit report 
    generated for 
    Organization A.}
    \label{fig:report_recon}
\end{figure}

\subsection{Discussion}
\label{subsec5.2}

The evaluation results demonstrate that \textit{Argos} 
maintains stable orchestration across distributed and 
heterogeneous environments. The global success rate was 96.7\% 
(95\% Wilson confidence interval $[95.19\%, 97.70\%]$) across 
810 automated executions over thirty days without manual operator 
intervention. The statistical analysis confirms two central 
claims: the reliability estimate is tightly bounded as 
established through Equation~\eqref{eq:wilson}, and the 
containerized execution model is hardware-agnostic, since the 
difference in failure rates between ARM and x86 deployments is 
not statistically significant (Table~\ref{tab:arch_contingency}). 
These findings are consistent with Property~\ref{prop:latency}, 
which predicts that execution correctness is preserved regardless 
of the geographic distribution of edge devices.

The results directly address the structural gaps identified in 
Section~\ref{sec:related}. Role-based access control was enforced 
consistently, tenant and device isolation operated correctly 
across all 330 validation operations, and the automation engine 
executed all scheduled tasks autonomously. End-to-end 
traceability from task dispatch to dashboard visualization and 
report generation was confirmed throughout the evaluation period.

The primary operational risk identified was tunnel stability 
rather than architectural limitations. Most observed failures 
were attributable to external factors at the client side, 
including power outages, ISP interruptions, and voltage 
fluctuations. This highlights that communication channel 
reliability is partially dependent on the client's 
infrastructure quality, which lies outside the platform's 
direct control.

\subsection{Limitations}
\label{subsec:limitations}

Several limitations must be acknowledged based on the evaluation 
conducted.

Regarding edge device hardware, the single-board computer used 
in physical deployments is suitable for environments with up to 
approximately 200 discoverable hosts. As the number of tasks or 
network scope increases, the computational demands of concurrent 
containerized execution may exceed the capacity of this hardware 
class, requiring migration to a more capable device such as a 
mini PC, which remains fully compatible with the platform's 
execution model.

Connectivity stability at the edge device is partially dependent 
on the client organization's network infrastructure. Power 
outages, ISP interruptions, and voltage fluctuations can cause 
tunnel disconnections that result in missed executions, as 
observed during the evaluation. This represents an environmental 
constraint outside the platform's direct control.

The evaluation was conducted across three edge devices with 
client environments of up to 180 hosts. Deployments involving 
50 to 100 simultaneous edge devices, or environments exceeding 
200 hosts per node, were not assessed. Formal scalability testing 
under increased endpoint density remains a direction for future 
work.

Finally, while the platform currently integrates AI capabilities 
within the automation engine, the incorporation of autonomous 
agents capable of adaptive decision-making during execution was 
not fully explored. This constitutes a natural evolution of the 
platform's orchestration model and a promising direction for 
future development.

\section{Conclusion and Future Work}
\label{sec:conclusion}

This work presented \textit{Argos}, a multi-tenant 
command-and-control platform for transparent, automated, and 
long-term orchestration of distributed security auditing devices. 
Unlike traditional offensive C2 frameworks that prioritize 
stealth and operator-driven interaction, \textit{Argos} focuses 
on structured automation, role-based governance, and tenant 
isolation for legitimate operational environments.

The evaluation demonstrated stable orchestration across three 
geographically distributed edge devices with a global success 
rate of 96.7\% over 810 automated executions during thirty days. 
Role-based access control, tenant-device binding, and result 
isolation were validated without a single violation across 330 
controlled operations. The automated scheduling engine operated 
without manual intervention throughout, confirming the platform's 
suitability for long-term autonomous security auditing.

As future work, scalability testing involving 50 or more 
simultaneous edge devices will assess behavior under higher 
concurrent workloads. Evaluation under adverse network conditions, 
including simulated denial-of-service scenarios and persistent 
packet loss, will strengthen the resilience assessment. The 
integration of autonomous AI agents within security task workflows 
represents a natural evolution of the orchestration model, 
enabling adaptive decision-making beyond current rule-based 
automation.

\newpage

\bibliographystyle{elsarticle-num} 
\bibliography{refrences}

\end{document}